\newtheorem{theorem}{Theorem}[section]
\newtheorem{observation}[theorem]{Observation}
\title[AAMAS-2025 Formatting Instructions]{Interval Selection with Binary Predictions}
\author{Christodoulos Karavasilis}
\affiliation{
  \institution{University of Toronto}
  \city{Toronto}
  \country{Canada}}
\email{ckar@cs.toronto.edu}
\begin{abstract}
Following a line of work that takes advantage of vast machine-learned data to enhance online algorithms with (possibly erroneous) information about future inputs, we consider predictions in the context of deterministic algorithms for the problem of selecting a maximum weight independent set of intervals arriving on the real line. We look at two weight functions, unit (constant) weights, and weights proportional to the interval's length. In the classical online model of irrevocable decisions, no algorithm can achieve constant competitiveness (Bachmann et al. \cite{bachmann2013online} for unit, Lipton and Tomkins \cite{lipton1994online} for proportional). In this setting, we show that a simple algorithm that is faithful to the predictions is optimal, and achieves an objective value of at least $OPT-\eta$, with $\eta$ being the total error in the predictions, both for unit, and proportional weights.

When revocable acceptances (a form of \textit{preemption}) are allowed, the optimal deterministic algorithm for unit weights is $2k$-competitive \cite{borodin2023any}, where $k$ is the number of different interval lengths. We give an algorithm with performance $OPT -\eta$ (and therefore $1$-consistent), that is also $(2k+1)$-robust. For proportional weights, Garay et al. \cite{garay1997efficient} give an optimal $(2\phi +1)$-competitive algorithm, where $\phi$ is the golden ratio. We present an algorithm with parameter $\lambda > 1$ that is $\frac{3\lambda}{\lambda -1}$-consistent, and $\frac{4\lambda^2 + 2\lambda}{\lambda -1}$-robust. Although these bounds are not tight, we show that for $\lambda > 3.42$ we achieve consistency better than the optimal online guarantee in \cite{garay1997efficient}, while maintaining bounded robustness.

 We conclude with some experimental results on real-world data that complement our theoretical findings, and show the benefit of prediction algorithms for online interval selection, even in the presence of high error.
\end{abstract}
\keywords{online algorithms, predictions, interval selection, scheduling}
\newcommand{\BibTeX}{\rm B\kern-.05em{\sc i\kern-.025em b}\kern-.08em\TeX}
\begin{document}


\pagestyle{fancy}
\fancyhead{}


\maketitle 


\section{Introduction}
We consider the problem of online \textit{interval selection}, or \textit{interval scheduling} on a single machine, where real-length intervals arrive online, and we must output a set of non-conflicting intervals. Each interval is associated with a weight, and the goal is to maximize the sum of weights of the intervals in the solution. This problem is equivalent to finding a maximum weight independent set in interval graphs. We focus on two weight functions, \textit{unit} (or constant) weights, and \textit{proportional} weights, where the weight of an interval is equal to its length.  While interval scheduling is often studied under the real-time assumption where intervals arrive in order of non-decreasing starting times, we consider the generalized version of \textit{any-order} arrivals \cite{borodin2023any}. In the traditional online model of irrevocable decisions, no algorithm (even randomized) can achieve a constant competitive ratio (Bachmann et al. \cite{bachmann2013online} for unit weights, Lipton and Tomkins \cite{lipton1994online} for proportional). Because of this, and because some applications permit it, a relaxation of the problem that allows for revocable acceptances has been considered. In this model, every new interval can be accepted by displacing any conflicting intervals in the solution, but every rejection is final. In the area of scheduling, this is sometimes also called \textit{preemption}, although no ``restarts'' are allowed in our problem. In the offline setting, an optimal solution can be easily found in polynomial time, both for unit, and for proportional weights \cite{kleinberg2006algorithm}. The applications of interval scheduling include routing \cite{plotkin1995competitive}, computer wiring \cite{gupta1979optimal}, project selections during space missions \cite{hall1994maximizing}, and satellite photography \cite{gabrel1995scheduling}. A more detailed discussion on the applications of interval scheduling can be found in the surveys by Kolen et al. \cite{kolen2007interval} and Kovalyov et al. \cite{kovalyov2007fixed}.\\\\
Motivated by advancements in machine learning and access to a plethora of data, there has been an effort to equip online algorithms with possibly erroneous predictions about the input instance. Such algorithms are able to achieve much better performance when these predictions are accurate, overcoming some pessimistic bounds of competitive analysis, and helping to bridge the gap between theory and practice. Various classical online problems such as ski rental and non-clairvoyant job scheduling \cite{purohit2018improving}, caching \cite{lykouris2021competitive}, facility location \cite{almanza2021online}, metrical task systems \cite{antoniadis2023online}, and matching \cite{antoniadis2020secretary} have been considered in this model. See Mitzenmacher and Vassilvitskii \cite{DBLP:books/cu/20/MitzenmacherV20} for a more detailed survey on the topic, and \cite{ALPS} for an online repository of relevant papers. Predictions are also a form of \textit{untrusted advice} (Angelopoulos et al. \cite{angelopoulos2024online}), a natural extension of the model of online algorithms with advice (Boyar et al. \cite{boyar2017online}) when the advice is imperfect. Advice research tends to be more information theoretic, focusing on tradeoffs between the number of advice bits and the quality of the solution. Although predictions are often available as offline information, given to the algorithm in advance, we consider a model where a prediction is associated with each input item, and is also given online. This is quite natural and has been considered before for problems such as paging, graph coloring, and packing (\cite{lykouris2021competitive,rohatgi2020near,antoniadis2023paging,antoniadis2024online,grigorescu2024simple}). This setting also allows for an oracle to adapt as more of the input is revealed, enabling research where there are different bounds on the quality of later predictions, and allowing one to tailor the predictor algorithm directly \cite{elias2024learning}. Furthermore, we use binary predictions, which has our model falling in line with work considering limited size, or \textit{succinct} predictions \cite{antoniadis2023paging,berg2024complexity,angelopoulos2023contract}.\\\\
\textbf{Related work.} Table \ref{tab:prev_work} shows the most relevant existing work in the conventional online setting. In the case of irrevocable decisions, no algorithm (even randomized) can achieve a constant competitive ratio. For the relaxed model of revocable acceptances, we use an asterisk to indicate that the competitive ratio is optimal. In the context of randomized algorithms and revocable acceptances, Emek at al. \cite{emek2016space} give a $6$-competitive algorithm for unit weights, while we know of no work improving upon the $(2\phi + 1)$-competitive algorithm.

\begin{table}[h]\centering
	\caption{Online results without predictions: $n$ is the size of the input, $k$ is the number of different lengths, $\Delta$ is the ratio of the longest to shortest interval.}
	\label{tab:prev_work}
	\begin{tabular}{c  c  c}
		 & \textit{Unit} & \textit{Proportional} \\ \toprule
		\parbox[c]{2cm}{Irrevocable \\ (randomized)}  & $\Omega(n)$ \cite{bachmann2013online} & $\Omega(\log\Delta)$ \cite{lipton1994online} \\ \midrule
		\parbox[c]{2cm}{Revocable \\ (deterministic)} & $2k^*$ \cite{borodin2023any} & $(2\phi + 1)^*$ \cite{garay1997efficient,tomkins1995lower}  \\
  \bottomrule
	\end{tabular}
\end{table}

Boyar et al. \cite{boyar2023online} is the most closely related work to our problem with predictions, and motivated our study. They consider the case of unit weighted intervals on a line graph, and give an optimal deterministic algorithm in the setting of irrevocable decisions with performance $OPT - \eta$ for a different set of predictions and error measure. We extend their work using (possibly adaptive) predictions of limited size, considering an additional weight function of interest, and initiating the study of these problems with revocable decisions.\\\\
\textit{Structure of the paper.} In section \ref{section:prelim} we formally define the model, including our predictions and error measure. Section \ref{section:irrev} is about the model of irrevocable decisions, whereas in section \ref{section:rev} we allow for revocable acceptances. We conclude with some experiments on real-world data (section \ref{section:exp}) that showcase the usefulness of our predictions, and complement our theoretical results.


\section{Problem Setting, Definitions and Discussion} \label{section:prelim}
In the problem of \textit{interval selection}, an instance consists of a set of intervals arriving on the real line. Each interval is specified by a starting point $s_i$ and an end point $f_i$, with $s_i < f_i$, and it occupies space $[s_i,f_i)$. The conventional notions of  \textit{intersection}, \textit{disjointness}, and \textit{containment} apply. Two intervals can conflict because of a \textit{partial conflict}, or because of \textit{proper inclusion} (figure \ref{fig:conflicts}). In the latter case, we say that the smaller (larger) interval is subsumed (subsumes) by the other. Each interval $I$ is also associated with a weight $w(I)$. The goal is to output a set of disjoint intervals of maximum weight. We focus on two types of weight functions, \textit{unit} weights where $w(I) = 1$ for all $I$, and \textit{proportional} weights where $w(I) = f_i - s_i$. With unit weights, we want to accept as many intervals as possible, whereas with proportional weights, we want to cover as much of the line as possible.
\begin{figure}[H]
	\centering
	
	\begin{tikzpicture}[scale=0.75]

	\node[draw=none] (I1a) at (-10,0) {$ $};
	\node[draw=none] (I1b) at (-6,0) {$ $};
	\draw[line width=0.5mm] (I1a) -- (I1b);


	\node[draw=none] (I2a) at (-7,0.5) {$ $};
	\node[draw=none] (I2b) at (-3,0.5) {$ $};
	\draw[line width=0.5mm] (I2a) -- (I2b);
	
	\node at (-5,-1) {$(a) \text{ Partial Conflict.}$};
	
	\node[draw=none] (I3a) at (-6,-3.5) {$ $};
	\node[draw=none] (I3b) at (-4,-3.5) {$ $};
	\draw[line width=0.5mm] (I3a) -- (I3b);


	\node[draw=none] (I4a) at (-9,-3) {$ $};
	\node[draw=none] (I4b) at (-1,-3) {$ $};
	\draw[line width=0.5mm] (I4a) -- (I4b);
	
	\node at (-5,-4.5) {$(b) \text{ Proper inclusion conflict}.$};

	\end{tikzpicture} 
	\caption{Types of conflicts.}\label{fig:conflicts}
\end{figure}
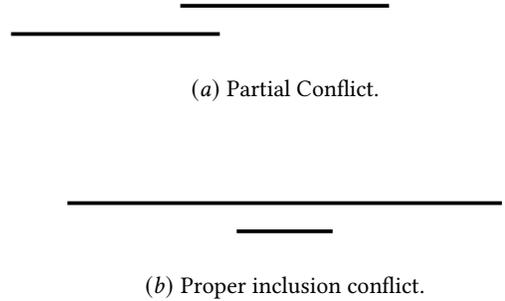

The sequence of the intervals arriving online is chosen by an adversary with full knowledge of the algorithm. In the conventional model of \textit{irrevocable decisions}, each accepted interval is final and will be part of the final solution. A well studied relaxation of the model is allowing for \textit{revocable acceptances}, where any new interval may be accepted by displacing any conflicting intervals in the solution, but every rejection is final. This is sometimes also referred to as \textit{preemption}.\\\\
We measure the performance of an online algorithm using \textit{worst case competitive analysis} \cite{BorOnlineBook}. Let $ALG$ denote the total weight of the algorithm's solution, and $OPT$ be the total weight of an optimal solution. An algorithm is strictly $c$-competitive, if for all instances and input permutations, we have that $\frac{OPT}{ALG}\geq c$. We also refer to $ALG$ as the \textit{performance} of the algorithm. Lastly, we will use $ALG$ (respectively $OPT$) to denote the set of intervals in the algorithm's (optimal) solution. The meaning should always be clear from context.\\\\
\textbf{Predictions.} Every interval is associated with a binary prediction that becomes known at the time of the interval's arrival, denoting whether or not that interval is part of some fixed optimal solution. More precisely, $Prd(I)=1$ means interval $I$ is predicted to be optimal, and $Prd(I)=0$ means that it is predicted to not be optimal. Let $\mathcal{I}$ denote the set of all intervals in the instance. We define $\overrightarrow{\textbf{p}} = (Prd(I_1),Prd(I_2),...,Prd(I_{|\mathcal{I}|}))$ to be the binary vector of all the online predictions. Let also $\overrightarrow{\textbf{p}_{+}}$ be the predictions vector when all predictions are accurate, and $\overrightarrow{\textbf{p}_{-}}= \overline{\overrightarrow{\textbf{p}_{+}}}$.\\
\textbf{Error.} Every inaccurate prediction introduces an amount of error. Let $\eta(I)$ be the amount of error introduced by interval $I$. If the prediction of $I$ was accurate, we define $\eta(I)=0$. If $I$ was wrongly predicted to be non-optimal, let $\eta(I) = w(I)$, and if $I$ was wrongly predicted to be optimal, and $\mathcal{C}$ is the set of optimal intervals $I$ conflicts with, let $\eta(I) =\sum_{J\in \mathcal{C}}w(J)\;-\;w(I)$. Let the total error be $\eta = \sum_{I\in\mathcal{I}}\eta(I)$. One may fix any optimal solution to measure the error against. We use $\eta_{max}$ to denote the maximum possible error, i.e. $\eta_{max}=\eta$ when $\overrightarrow{\textbf{p}} = \overrightarrow{\textbf{p}_{-}}$.\\\\
A common approach to evaluate algorithms that use predictions is to focus on an algorithm's \textit{consistency} and \textit{robustness}. We say that an algorithm is $\gamma$-consistent if it is $\gamma$-competitive, whenever the predictions are accurate, i.e. $\overrightarrow{\textbf{p}} = \overrightarrow{\textbf{p}_{+}}$. We say that an algorithm is $\zeta$-robust, if it is $\zeta$-competitive regardless of the accuracy of the predictions. There is usually a trade-off between consistency and robustness, and the goal is to design algorithms with consistency close to $1$, and robustness that is not far worse than the competitiveness of the best predictionless, online algorithm.\\\\
Most of our proofs work by using a \textit{charging argument}, where we map the weight of optimal intervals to the weight of intervals taken by the algorithm, and sometimes error. This charging is usually defined in an online manner, and throughout the execution of the algorithm, we use $\Phi(I)$ to refer to the total amount of charge to interval $I$. In the model of revocable acceptances, we will distinguish between \textit{direct}, and \textit{transfer} charging. Transfer charging ($TC$) occurs at the moment a new interval is accepted by replacing existing intervals, and refers to the amount of charge it inherits because of this. Direct charge ($DC$) takes place afterwards, whenever an interval causes optimal intervals to be rejected. We use $TC(I)$ (respectively $DC(I)$) to denote the amount of transfer (direct) charge of an interval, with $\Phi(I) = TC(I) + DC(I)$.\\
We also use the notion of a \textit{predecessor trace}, which is analogous to Woeginger's \cite{woeginger1994line} \textit{predecessor chain} in th real-time model. If $I$ is an interval in the algorithm's solution, the \textbf{predecessor trace} $\mathcal{P}$ of $I$ is the maximal list of intervals $(P_1,P_2,..,P_k = I)$, such that $P_i$ was at some point accepted by the algorithm, but was later replaced by $P_{i+1}$. 


\section{Irrevocable Acceptances}\label{section:irrev}
In utilizing access to predictions, a natural algorithm to first consider is one that simply \textit{follows the predictions}. We therefore present algorithm \ref{alg:naive}, which is the main subject of this section.

\begin{algorithm}
\caption{\texttt{Naive}}\label{alg:naive}
\begin{algorithmic}
\State On the arrival of $I$:
\State $I_{s} \gets $ Set of intervals currently in the solution conflicting with $I$
\If{$Prd(I) = 1$ and $I_{s} = \emptyset$}
    \State Take $I$
\EndIf
\end{algorithmic}
\end{algorithm}
\textbf{Unit weights.} We first consider the case of unit weights, and prove the following (positive) result on the performance of algorithm \texttt{Naive}.

\begin{theorem}
    Algorithm Naive achieves $ALG \geq OPT - \eta$ for interval selection with unit weights.
    \label{theo:unw-naive-pos}
\end{theorem}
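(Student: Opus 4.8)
The plan is to set up a charging argument that maps each interval in the fixed optimal solution $OPT$ either to a distinct interval accepted by \texttt{Naive}, or to the error it generates. Fix the optimal solution against which $\eta$ is measured, and partition $OPT$ into $OPT_1$ (optimal intervals with $Prd(I)=1$) and $OPT_0$ (optimal intervals with $Prd(I)=0$). Every interval in $OPT_0$ is a wrongly-predicted-non-optimal interval, so it contributes $\eta(I)=w(I)=1$ to the error; hence $|OPT_0| \le \eta$. It therefore suffices to show that \texttt{Naive} accepts at least $|OPT_1|$ intervals, i.e. $ALG \ge |OPT_1| \ge |OPT| - \eta = OPT - \eta$.

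The key step is to exhibit an injection $\mu$ from $OPT_1$ into the set of intervals accepted by \texttt{Naive}. Take $I \in OPT_1$. Since $Prd(I)=1$, when $I$ arrived \texttt{Naive} rejected it only if its conflict set $I_s$ in the current solution was nonempty; and once an interval is accepted by \texttt{Naive} it is never removed (irrevocable acceptances), so any interval ever conflicting with $I$ in the solution is still present at the end. Thus for each $I \in OPT_1$ there is some interval $\mu(I) \in ALG$ that conflicts with $I$ — either $I$ itself (if it was accepted), or a blocker that was already in the solution. I would argue $\mu$ can be chosen injectively: because the intervals of $OPT_1 \subseteq OPT$ are pairwise disjoint, a single accepted interval $J$ can conflict with at most a bounded number of them, and more carefully, by choosing $\mu(I)$ to be the interval of $ALG$ that "caused" $I$'s rejection (the unique one that blocked it, taking e.g. the one present at arrival time), distinct disjoint optimal intervals get distinct blockers. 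The cleanest route is a direct online charging: when \texttt{Naive} processes $I \in OPT_1$, send one unit of charge from $I$ to the accepted interval responsible for it, and show no accepted interval receives more than one unit — here disjointness of $OPT$ is exactly what rules out double-charging, since two optimal intervals charging the same $J$ would both have to conflict with $J$ in a way that forces them to overlap each other.

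The main obstacle is making the injectivity/no-overcharging claim airtight: a priori an accepted short interval $J$ could lie inside the span of two disjoint optimal intervals only if... it cannot, but an accepted \emph{long} interval $J$ could properly contain several disjoint optimal intervals, so naively "charge to any conflicting accepted interval" overcharges $J$. The fix is to be careful about \emph{which} accepted interval is charged and \emph{when}: I would charge $I\in OPT_1$ to the specific interval that was in \texttt{Naive}'s solution at the instant $I$ arrived and blocked it (or to $I$ itself if accepted). If $I, I' \in OPT_1$ both charge the same $J\in ALG$ this way, then $J$ was in the solution when each arrived and conflicted with each; since $I$ and $I'$ are disjoint, $J$ must partially conflict with the one that arrived earlier on one side and... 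I would push this to a contradiction using the "at most two disjoint intervals can intersect a fixed interval only on opposite sides, but then the earlier blocker argument / the structure of \texttt{Naive} forbids it," concluding each $J$ is charged at most once. Combining $ALG \ge |OPT_1|$ with $|OPT_0| \le \eta$ gives $ALG \ge |OPT| - \eta = OPT - \eta$. A short remark would note this also handles the subsumption cases, since $\eta(I)$ for a wrongly-predicted-optimal interval only helps the optimal side's accounting and is not needed for this direction.
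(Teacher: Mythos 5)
There is a genuine gap, and it sits exactly where you flagged the "main obstacle." Your plan reduces the theorem to the claim $ALG \ge |OPT_1|$ via an injection from $OPT_1$ into $ALG$, but that claim is false. Consider a single long interval $J$ with $Prd(J)=1$ arriving first: \texttt{Naive} accepts it. Then two (or $m$) pairwise disjoint optimal intervals contained in $J$, all with prediction $1$, arrive and are all rejected because $J$ blocks each of them. Here $|OPT_1|=m$ but $ALG=1$, so no injection into $ALG$ exists. Your proposed fix --- charge each $I\in OPT_1$ to the blocker present at its arrival and derive a contradiction from the disjointness of $OPT$ --- cannot be pushed through: in this example every one of the $m$ disjoint optimal intervals has the \emph{same} blocker $J$, and there is no contradiction. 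Disjointness of $OPT$ rules out two optimal intervals overlapping each other, but it does not prevent one accepted interval from containing (or straddling) several of them.

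The missing idea is precisely the piece of error you set aside in your closing remark as "not needed for this direction": the error $\eta(J)$ generated by a wrongly-predicted-optimal blocker. If $J\in ALG$ is not optimal and conflicts with the set $\mathcal{C}$ of optimal intervals, then by the paper's error definition $\eta(J)=\sum_{I\in\mathcal{C}}w(I)-w(J)=|\mathcal{C}|-1$, so the set $H_J\cup\{J\}$ (its error elements together with $J$ itself) has exactly $|\mathcal{C}|$ elements --- just enough to absorb every optimal interval that $J$ blocks. This is the paper's route: an injection $F:OPT\to ALG\cup H$ where a blocked interval in $OPT_1$ maps into $H_{I_c}\cup\{I_c\}$ for its blocker $I_c$, and an interval in $OPT_0$ maps to its own error element. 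Your treatment of $OPT_0$ is fine, but the $OPT_1$ side must target $ALG\cup H$, not $ALG$ alone; as stated, the decomposition $ALG\ge|OPT_1|$, $|OPT_0|\le\eta$ discards exactly the error terms that make the theorem true.
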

\begin{proof}
The proof works by mapping optimal intervals to error, and to intervals taken by the algorithm, given that every missed optimal interval can be associated with at least one unit of error. Let $OPT$ be an optimal solution, and let $ALG$ be the algorithm's solution. For each unit of error, we define an error element $h$.
Let $H=\{h_{1},...,h_{\eta}\}$ be the set of error elements. Let $H_{I} \subseteq H$, be the set of error elements corresponding to the error $\eta(I)$. It holds that $H_{I} \cap H_{J} = \emptyset$ for any two distinct intervals $I,J$, and $\bigcup\limits_{I} H_{I} = H$.\\\\
We define an injective mapping $F: OPT \rightarrow ALG \cup H$ as follows: Let $I_{opt}$ be an optimal interval. If $I_{opt}$ is taken by the algorithm, it is mapped onto itself. If $I_{opt}$ is not taken by the algorithm, there are two possibilities. The first possibility is that $Prd(I_{opt}) = 1$, but $I_{opt}$ conflicted with another interval $I_{c}$ taken by the algorithm. For $I_{c}$ to have been taken, it means that $Prd(I_{c}) = 1$, and $|H_{I_{c}} \cup \{I_c\}|>0$ because $I_c$ conflicts with $I_{opt}$. In this case we map $I_{opt}$ to an error element $h_{c}\in H_{I_{c}}$, or $I_c$ itself. Even if more optimal intervals were not taken because they conflicted with $I_{c}$, there will be enough distinct elements in $H_{I_{c}} \cup \{I_c\}$ to map them to.\\\\
The second possibility is that $I_{opt}$ was not taken, because $Prd(I_{opt}) = 0$. In this case, we have that $|H_{I_{opt}}| = 1$, and we can map $I_{opt}$ to the error element of its own prediction. In conclusion, we have that $ALG + |H| \geq OPT$, and we get the desired bound.    
\end{proof}
\begin{corollary}
Algorithm Naive is $1$-consistent.
\end{corollary}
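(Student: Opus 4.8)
The plan is to derive the corollary directly from Theorem~\ref{theo:unw-naive-pos}, since $1$-consistency is exactly the statement that the algorithm is $1$-competitive in the special case where all predictions are correct. First I would observe that when $\overrightarrow{\textbf{p}} = \overrightarrow{\textbf{p}_{+}}$, every prediction is accurate, so by the definition of the error measure we have $\eta(I) = 0$ for every interval $I \in \mathcal{I}$, and hence $\eta = \sum_{I \in \mathcal{I}} \eta(I) = 0$.

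Next I would invoke Theorem~\ref{theo:unw-naive-pos}, which guarantees $ALG \geq OPT - \eta$ for unit weights; substituting $\eta = 0$ gives $ALG \geq OPT$. For the reverse inequality I would note that \texttt{Naive} only ever adds an interval $I$ to its solution when $I_s = \emptyset$, i.e. when $I$ conflicts with no interval currently selected; an easy induction on the arrival order shows the maintained set is always a set of pairwise disjoint intervals, hence a feasible solution, so $ALG \leq OPT$ by optimality of $OPT$. Combining the two bounds yields $ALG = OPT$ whenever the predictions are accurate, which is precisely the definition of $1$-consistency.

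I do not expect any real obstacle here: the corollary is an immediate specialization of the theorem together with the trivial feasibility of \texttt{Naive}'s output. The only points worth stating carefully are (i) that accurate predictions force $\eta = 0$, which is a direct unpacking of the error definition, and (ii) that the algorithm's output is always feasible, so that $ALG = OPT$ rather than merely $ALG \ge OPT$ — though even $ALG \ge OPT$ alone already suffices to conclude $1$-competitiveness on the accurate-prediction instances.
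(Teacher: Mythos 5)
Your proposal is correct and matches the paper's (implicit) reasoning: the corollary is an immediate consequence of Theorem~\ref{theo:unw-naive-pos}, obtained by noting that accurate predictions force $\eta = 0$ and that \texttt{Naive}'s output is always feasible. Your extra care in verifying feasibility to get the exact equality $ALG = OPT$ is fine but, as you note, not needed for $1$-consistency.
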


\begin{theorem}
    For every deterministic algorithm, there exist a unit weights instance and predictions, such that $ALG = OPT - \eta$.
    \label{theo:unw-naive-neg}
\end{theorem}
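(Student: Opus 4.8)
The plan is to fix an arbitrary deterministic algorithm $\mathcal{A}$ and build, by simulating it, a tiny adversarial instance (at most three intervals, all inside $[0,2)$) tailored to $\mathcal{A}$. Since $\mathcal{A}$ is deterministic and the adversary knows $\mathcal{A}$ completely, each prediction may be chosen after inspecting how $\mathcal{A}$ reacts to it, and one may also fix afterwards which optimal solution the error is measured against (as the model permits). The target configuration is one in which $\mathcal{A}$ loses exactly one optimal interval and that loss is \emph{paid for} by exactly one unit of prediction error, so that $ALG = OPT-\eta$ on the nose.

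Concretely, I would first present $I_1 = [0,1)$ and split into three cases on $\mathcal{A}$'s response. (i) If $\mathcal{A}$ accepts $I_1$ under $Prd(I_1)=1$: set $Prd(I_1)=1$ and then present two disjoint subintervals of $[0,1)$, say $J_1=[0,0.4)$ and $J_2=[0.6,1)$, both predicted $1$. Irrevocability forces $ALG=\{I_1\}$; fixing $OPT=\{J_1,J_2\}$ gives $OPT=2$, and the only nonzero error is $\eta(I_1)=w(J_1)+w(J_2)-w(I_1)=1$, so $ALG=1=OPT-\eta$. (ii) If $\mathcal{A}$ rejects $I_1$ under $Prd(I_1)=1$ but accepts it under $Prd(I_1)=0$: set $Prd(I_1)=0$ and present $J=[0.5,1.5)$ predicted $1$; again $ALG=\{I_1\}$, and fixing $OPT=\{J\}$ makes every prediction correct, so $\eta=0$ and $ALG=1=OPT-\eta$. (iii) If $\mathcal{A}$ rejects $I_1$ for both prediction values: set $Prd(I_1)=0$ and stop; fixing $OPT=\{I_1\}$ we get $ALG=0$, and $I_1$ is wrongly predicted non-optimal, so $\eta=w(I_1)=1=OPT-ALG$.

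The routine part is the three arithmetic checks, together with verifying that each fixed set is genuinely optimal (its size equals the maximum independent-set size of the instance) and that the listed conflicts are the only ones, so the error formula evaluates exactly as claimed. The one genuinely delicate point, and the reason the case split is needed, is $\mathcal{A}$'s behaviour when it does not accept $I_1$: merely unveiling the nested gadget after a rejection fails, since then $I_1$ is a lost optimal interval with no error to charge it against (if its prediction is $1$ and it remains optimal). Switching to $Prd(I_1)=0$ in cases (ii) and (iii) is precisely what makes a rejection accountable --- via $I_1$'s own error in (iii), and by re-routing the fixed optimum onto $J$ in (ii) (so that $I_1\notin OPT$ and its prediction conflicts with only a single optimal interval, hence contributes no error). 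Combined with Theorem \ref{theo:unw-naive-pos}, this shows the guarantee $OPT-\eta$ is best possible, and that \texttt{Naive} is an optimal deterministic algorithm for unit weights under irrevocable decisions.
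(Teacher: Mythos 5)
Your construction is correct and is essentially the paper's argument: an adaptive adversary presents one interval and, if it is accepted, reveals two disjoint intervals nested inside it, with the predictions arranged so that the single lost optimal interval is paid for by exactly one unit of error. The paper streamlines your three-way case split by committing to $Prd(I_{big})=0$ up front (so a rejection immediately yields $ALG=0$, $OPT=1$, $\eta=1$), which avoids your degenerate case (ii), where the equality holds only vacuously with $\eta=0$ and $ALG=OPT$.
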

\begin{proof}
    Let an interval $I_{big}$ arrive first, with $Prd(I_{big}) = 0$. If the algorithm rejects it, no more intervals arrive, and we have $ALG = 0$, $OPT = 1$, $\eta = 1$. If the algorithm takes $I_{big}$, two non-conflicting intervals arrive next, $I_{1}$ and $I_{2}$, both subsumed by $I_{big}$, with $Prd(I_{1}) = 0$ and $Prd(I_{2})=1$. In this case, we have $ALG = 1$, $OPT = 2$, $\eta = \eta(I_1) = 1$. In both cases, the equality holds. One can repeat this construction for an asymptotic result.
    \begin{figure}[h] 
        \centering
        \begin{tikzpicture}[scale=0.5]
	
	\node at (0,0.5) {$Prd(I_{big}) = 0$};
	\node[draw=none] (I1a) at (-8,0) {$ $};
	\node[draw=none] (I1b) at (8,0) {$ $};
	\draw[line width=0.5mm] (I1a) -- (I1b);

	\node[draw=none] (I5a) at (-6,-1) {$ $};
	\node[draw=none] (I5b) at (-1,-1) {$ $};
	\node at (-3.5,-2) {$Prd(I_{1})=0$};
	\draw[color=red,line width=0.5mm] (I5a) -- (I5b);

 \node[draw=none] (I6a) at (0,-1) {$ $};
	\node[draw=none] (I6b) at (5,-1) {$ $};
	\node at (2.5,-2) {$Prd(I_{2})=1$};
	\draw[color=red,line width=0.5mm] (I6a) -- (I6b);

	\end{tikzpicture} 
        \caption{Instance of theorem \ref{theo:unw-naive-neg}.}
        \label{fig:neg-unit-irrev}
    \end{figure}
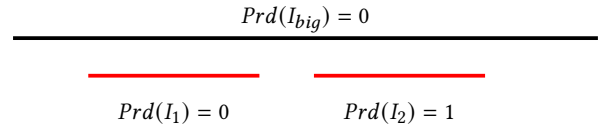
\end{proof}

\begin{corollary}
    Algorithm Naive is optimal for unit weights in the model of irrevocable acceptances.
\end{corollary}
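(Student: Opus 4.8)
The plan is to combine the matching bounds already in hand: \texttt{Naive}'s performance is sandwiched between $OPT-\eta$ from below (by Theorem~\ref{theo:unw-naive-pos}) and the best guarantee any deterministic algorithm can hope for, which is also $OPT-\eta$ (by Theorem~\ref{theo:unw-naive-neg}). First I would pin down what ``optimal'' means here: a deterministic algorithm is optimal for this model if no other deterministic algorithm has a worst-case performance guarantee, expressed as a function of $OPT$ and the total prediction error $\eta$, that strictly dominates it. Under this reading the corollary is essentially immediate.

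Concretely, the two steps are: (i) invoke Theorem~\ref{theo:unw-naive-pos} to record that on \emph{every} instance \texttt{Naive} outputs a solution of weight at least $OPT-\eta$; and (ii) invoke Theorem~\ref{theo:unw-naive-neg} to record that for \emph{any} deterministic algorithm $A$ there is an instance together with predictions on which $A$ obtains exactly $OPT-\eta$, so $A$ cannot promise more than $OPT-\eta$ on all inputs. Putting (i) and (ii) together, \texttt{Naive} already meets the ceiling that caps the whole class, hence it is optimal; the previously stated corollary that \texttt{Naive} is $1$-consistent is the special case $\eta=0$. I would also lean on the final sentence of the proof of Theorem~\ref{theo:unw-naive-neg}: placing many disjoint, far-apart copies of the gadget on the line scales $OPT$ and $\eta$ together, so the equality $ALG = OPT-\eta$ holds for arbitrarily large parameter values. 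This rules out not just a multiplicative improvement over $OPT-\eta$ but also any fixed additive slack, so the ceiling really is $OPT-\eta$ and not $OPT-\eta+c$.

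There is no genuine obstacle in this proof; the only care needed is in phrasing the optimality statement so that it is a clean logical consequence of the two-sided bound rather than a vague claim. I would close by noting that this same ``achievability meets impossibility'' template is exactly what gets reused, with weaker bounds, once revocable acceptances are introduced in Section~\ref{section:rev}.
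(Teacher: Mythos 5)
Your proposal is correct and matches the paper's intended reasoning exactly: the corollary is stated as an immediate consequence of Theorem~\ref{theo:unw-naive-pos} (achievability of $OPT-\eta$) and Theorem~\ref{theo:unw-naive-neg} (no deterministic algorithm can guarantee more), and the paper supplies no further argument. Your added remarks on the asymptotic repetition of the gadget and on the meaning of ``optimal'' are consistent with the paper's construction and do not change the approach.
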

Boyar et al. \cite{boyar2023online} were the first to consider the problem of interval selection with unit weights and irrevocable decisions, and they get the same (syntactically) performance, using a different algorithm, and a different set of predictions and error measure. In comparing our result to theirs, we note that our predictions are information theoretically strictly weaker than theirs\footnote{Their predictions consist of the entire input instance given in advance.}, and can in fact easily be extracted from theirs, allowing our algorithms to operate in their model. Furthermore, their predictions-following algorithm is enhanced with a \textit{greedy} aspect in order to achieve this optimal performance. As we will see in section \ref{section:exp}, experimental results on real-world data suggest that for some error ranges, pure \textit{greediness} is arguably a more important attribute than the use of predictions for getting a good solution, and the combination of both in the context of revocable acceptances works best.\\\\

We will now show that with \textbf{proportional weights}, algorithm \texttt{Naive} achieves the same performance bounds as in the case of unit weights.

\begin{theorem}
Algorithm Naive achieves $ALG \geq OPT - \eta$ for interval selection with proportional weights.
    \label{theo:prop-naive-pos}
\end{theorem}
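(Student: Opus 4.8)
The plan is to reuse the charging scheme of Theorem~\ref{theo:unw-naive-pos}, but to replace the discrete ``error elements'' by a fractional (weighted) accounting that compares total weights directly. Fix the optimal solution $OPT$ against which $\eta$ is measured, and recall that every interval $I\in ALG$ has $Prd(I)=1$ by the definition of \texttt{Naive}. Write $ALG^{+}=ALG\cap OPT$ and $ALG^{-}=ALG\setminus OPT$; an interval $I_c\in ALG^{-}$ was over-predicted, so its error is $\eta(I_c)=\sum_{J\in\mathcal{C}(I_c)}w(J)-w(I_c)$, where $\mathcal{C}(I_c)$ denotes the optimal intervals conflicting with $I_c$. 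I would then split $OPT=\sum_{I_{opt}\in OPT}w(I_{opt})$ into three groups: optimal intervals in $ALG$, missed optimal intervals with $Prd(I_{opt})=0$, and missed optimal intervals with $Prd(I_{opt})=1$, and bound each group.

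First I would dispatch the two easy groups: optimal intervals taken by the algorithm contribute exactly $\sum_{I\in ALG^{+}}w(I)$, which is part of $ALG$; and a missed optimal interval with $Prd(I_{opt})=0$ has $\eta(I_{opt})=w(I_{opt})$, so its weight is paid by its own error term. The crux is the third group. If $I_{opt}\in OPT\setminus ALG$ with $Prd(I_{opt})=1$, then \texttt{Naive} rejected it only because some conflicting interval $I_c$ was already in the solution on its arrival; since acceptances are irrevocable, $I_c\in ALG$. Because the intervals of a fixed optimal solution are pairwise disjoint, $I_c$ cannot itself be optimal, so $I_c\in ALG^{-}$ and $I_{opt}\in\mathcal{C}(I_c)$. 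Assign each such $I_{opt}$ to one blocking $I_c$. The set $S(I_c)$ of optimal intervals assigned to a fixed $I_c\in ALG^{-}$ is a subset of $\mathcal{C}(I_c)$, so $\sum_{I_{opt}\in S(I_c)}w(I_{opt})\le\sum_{J\in\mathcal{C}(I_c)}w(J)=w(I_c)+\eta(I_c)$; summing over $I_c\in ALG^{-}$ bounds the third group by $\sum_{I_c\in ALG^{-}}w(I_c)+\sum_{I_c\in ALG^{-}}\eta(I_c)$.

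Combining the three bounds gives $OPT\le\sum_{I\in ALG^{+}}w(I)+\sum_{I\in ALG^{-}}w(I)+\big(\sum_{I_{opt}\in OPT\setminus ALG,\,Prd=0}\eta(I_{opt})+\sum_{I_c\in ALG^{-}}\eta(I_c)\big)=ALG+B$, where $B$ is a sum of distinct error terms: the first family involves intervals with prediction $0$ and the second intervals with prediction $1$, so no interval's error is counted twice. Since every $\eta(I)\ge 0$ (an exchange argument against $OPT$ shows $\sum_{J\in\mathcal{C}(I)}w(J)\ge w(I)$ whenever $I$ is not optimal), we get $B\le\sum_{I\in\mathcal{I}}\eta(I)=\eta$, hence $ALG\ge OPT-\eta$.

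I expect the main obstacle to be the budget step for $ALG^{-}$: showing that a single over-predicted algorithm interval $I_c$ that simultaneously blocks several optimal intervals still carries enough combined weight-plus-error to cover all of them. This is exactly where the chosen error measure pays off, via the identity $w(I_c)+\eta(I_c)=\sum_{J\in\mathcal{C}(I_c)}w(J)$; the rest is bookkeeping, in particular making sure irrevocability is genuinely used to place $I_c$ in $ALG$ and that the error terms used are pairwise disjoint and non-negative.
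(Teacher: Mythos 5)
Your proof is correct and follows essentially the same charging scheme as the paper: optimal intervals in $ALG$ pay for themselves, missed optimal intervals with prediction $0$ are paid by their own error, and missed optimal intervals with prediction $1$ are charged to a blocking non-optimal interval $I_c\in ALG$ via the identity $w(I_c)+\eta(I_c)=\sum_{J\in\mathcal{C}(I_c)}w(J)$. The only difference is presentational: you work with weight sums directly, which cleanly sidesteps the paper's integrality assumption and its subsequent transport-matrix generalization to real lengths.
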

\begin{proof}
        Without loss of generality, we assume integral lengths of intervals, and later explain how to generalize to real lengths. We discretize the weight of intervals into weight units, and define a weight element $w$ for each unit of weight. Let $W_{I} = \{w_{1},...,w_{w(I)}\}$ be the set of weight elements corresponding to the weight of interval $I$, with $W_{I} \cap W_{J} = \emptyset$ for any two distinct intervals $I,J$ . Let $W_{opt} = \bigcup_{I\in OPT}W_{I}$, and $W_{alg} = \bigcup_{I\in ALG}W_{I}$. The sets $H$ and $H_{I}$ are defined as for the unweighted algorithms. Lastly, let $C_{opt}(I)$ be the set of optimal intervals in $OPT$ that conflict with interval $I$.\\\\
    We argue for the existence of an injective mapping $F: W_{opt} \rightarrow W_{alg} \cup H$ as follows: Let $I_{opt}$ be an optimal interval. If $I_{opt}$ is taken by the algorithm, we map the elements of $W_{I_{opt}}$ to their corresponding elements in $W_{alg}$. If $I_{opt}$ is not taken by the algorithm, there are two cases. The first case is that $I_{opt}$ did not conflict with any interval in the solution, but $Prd(I_{opt}) = 0$. In this case, we know that $\eta(I_{opt}) = |H_{I_{opt}}| = w(I_{opt})$, and we can map the weight elements of $I_{opt}$ to error elements in $H_{I_{opt}}$.\\\\
    The other case is that $I_{opt}$ conflicted with at least one interval in the solution. Let that conflicting interval be $I_{c}$. It could also be that $Prd(I_{opt}) = 0$, and $H_{I_{opt}}$ would have error elements we can use, but we will assume the worst case of $Prd(I_{opt}) = 1$. In this case, it could be  $|W_{I_{opt}}| >  |H_{I_{c}}|$, and we cannot map the weight elements to error elements exclusively. We can, however, map the optimal weight elements, to elements in $H_{I_{c}} \cup W_{I_{c}}$. To see that there will always be sufficiently many unmapped elements, notice that $|H_{I_{c}} \cup W_{I_{c}}| \geq |\bigcup_{I \in C_{opt}(I_{c})}W_{I}|$. This is because $|H_{I_{c}}| = |\bigcup_{I \in C_{opt}(I_{c})}W_{I}| - |W_I|$, and $W_{I_c} \cap H_{I_c} = \emptyset$ always holds. We conclude that $|W_{alg}| + |H| \geq |W_{opt}|$, and we get the desired bound.\\\\
    To adapt the proof to real lengths, instead of considering sets of error and weight elements, we can define a transport plan using two transport matrices $H$ and $W$ of size $OPT\times ALG$. $H_{ij}$ (respectively $W_{ij}$) corresponds to the (real) amount of weight, or mass, mapped from $I_i \in OPT$ to the amount of error (resp. weight) introduced by $I_j \in ALG$. We can define these matrices such that for $1 \leq i \leq OPT$, $\sum_{1\leq j \leq ALG} H_{ij} + W_{ij} = w(I_i)$, for $1\leq j \leq ALG$, we have that $\sum_{1\leq i \leq OPT}H_{ij} \leq \eta(I_j)$ and $\sum_{1\leq i \leq OPT}W_{ij} \leq w(I_j)$.
\end{proof}
\begin{theorem}
For every deterministic algorithm, there exist a proportional weights instance and predictions, such that $ALG = OPT - \eta$.
    \label{theo:prop-naive-neg}
\end{theorem}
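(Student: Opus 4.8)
The plan is to construct an adversarial instance analogous to the one in Theorem~\ref{theo:unw-naive-neg}, but where interval lengths (and hence weights) are chosen so that the proportional-weight accounting forces the claimed equality regardless of the deterministic algorithm's choices. The adversary presents a first interval $I_{big}$ with $Prd(I_{big}) = 0$, and then branches on whether the algorithm accepts or rejects it.

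First I would handle the rejection branch: if the algorithm rejects $I_{big}$, the adversary stops immediately. Then $I_{big}$ is the (unique) optimal solution, so $OPT = w(I_{big}) = f - s$, while $ALG = 0$. Since $I_{big}$ was optimal but predicted non-optimal, $\eta = \eta(I_{big}) = w(I_{big})$, giving $ALG = 0 = OPT - \eta$ exactly. Second, the acceptance branch: if the algorithm takes $I_{big}$, the adversary releases intervals strictly contained in $[s,f)$ and pairwise disjoint, all of which conflict with $I_{big}$ by proper inclusion. I would choose these so that their total length exceeds $w(I_{big})$ — say two disjoint subintervals whose lengths sum to strictly more than $f-s$ is impossible since they fit inside $[s,f)$, so instead I will make the new intervals the optimal ones and give $I_{big}$ itself the small weight. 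Concretely, reverse the roles: let the first interval be small with $Prd = 0$, and the follow-ups large; but the follow-ups must all be contained in the first to conflict. This tension is exactly the main obstacle (see below). The cleanest resolution is: $I_{big}$ has $Prd(I_{big})=0$; upon acceptance, release pairwise-disjoint subintervals $J_1,\dots,J_m$ of $[s,f)$ with $Prd(J_i)=1$ for all $i$. The optimal solution is $\{J_1,\dots,J_m\}$ with $OPT=\sum_i w(J_i)$; the algorithm, having committed to $I_{big}$, can accept none of the $J_i$, so $ALG = w(I_{big})$. The error is $\eta = \eta(I_{big}) = \sum_i w(J_i) - w(I_{big}) = OPT - ALG$, i.e. $ALG = OPT - \eta$ again. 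Note $Prd(J_i)=1$ is accurate, contributing no error.

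The main obstacle is exactly the geometric constraint in the acceptance branch: the follow-up intervals must conflict with $I_{big}$ (so the algorithm cannot take them after committing), yet their total weight must exceed $w(I_{big})$ to make $\eta(I_{big}) > 0$ meaningful and the bound non-trivial. Pairwise-disjoint subintervals of $[s,f)$ have total length at most $f-s = w(I_{big})$, so I cannot get $OPT > ALG$ purely from proper-inclusion conflicts with disjoint children. The fix is to iterate the construction recursively, as the excerpt hints with ``One can repeat this construction for an asymptotic result'': let $J_1$ be a subinterval of $I_{big}$ with $Prd(J_1)=0$; if the algorithm rejects $J_1$ we continue branching as in the rejection case scaled down, and if it accepts $J_1$ we release subintervals of $J_1$, and so on. At each level the adversary maintains the invariant $ALG = OPT - \eta$ by the same local accounting used in Theorem~\ref{theo:unw-naive-neg} (an interval wrongly predicted non-optimal contributes exactly the weight gap to $\eta$), and the recursion either terminates with the equality holding exactly or runs to produce an asymptotic family of instances. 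I would present the two-level construction in detail, verify $ALG$, $OPT$, and $\eta$ on each branch, observe the equality is exact in both, and remark that unfolding the recursion yields the asymptotic statement.
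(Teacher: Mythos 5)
Your rejection branch matches the paper exactly, and you correctly identify the geometric obstacle in the acceptance branch: pairwise-disjoint subintervals of $I_{big}$ can never have total weight exceeding $w(I_{big})$. But your resolution of that obstacle does not work, and the error accounting underneath it is wrong. You write $\eta = \eta(I_{big}) = \sum_i w(J_i) - w(I_{big})$, which applies the error formula for an interval \emph{wrongly predicted to be optimal}; $I_{big}$ has $Prd(I_{big})=0$, so under the paper's definition its error is either $w(I_{big})$ (if it lies in the fixed optimal solution) or $0$ (if it does not) --- never the weight gap to its conflicts. Working the cases out: since $\sum_i w(J_i) \le w(I_{big})$, the set $\{I_{big}\}$ is (weakly) optimal. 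If you fix $OPT=\{I_{big}\}$, then $\eta \ge \eta(I_{big}) = w(I_{big})$ plus nonnegative contributions from the $J_i$, giving $OPT - \eta \le 0 < w(I_{big}) = ALG$, a strict inequality in the wrong direction for a tightness result. If instead the $J_i$ tile $I_{big}$ exactly and you fix $OPT = \{J_1,\dots,J_m\}$, then all predictions are accurate, $\eta = 0$, and $ALG = OPT$ --- a vacuous witness that says nothing about whether an algorithm can beat $OPT-\eta$. The proposed recursion does not escape this either: once the algorithm has irrevocably accepted $I_{big}$, every subinterval must be rejected, so there is no further branching, and the same accounting problem persists at every level.

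The paper's fix is to abandon proper inclusion in the acceptance branch and use \emph{partial} conflicts with intervals that extend beyond $I_1$: it releases $I_2$ and $I_3$ partially overlapping $I_1$ on either side, disjoint from each other, with $w(I_2)=w(I_1)$, $Prd(I_2)=1$, and $w(I_3)=2w(I_1)$, $Prd(I_3)=0$. Now $OPT=\{I_2,I_3\}$ has weight $3w(I_1) > w(I_1)$, so $I_1$ is genuinely non-optimal and $Prd(I_1)=0$ is accurate (contributing no error); the only error is $\eta(I_3)=w(I_3)=2w(I_1)$ from the optimal interval wrongly predicted non-optimal, which gives exactly $ALG = w(I_1) = 3w(I_1) - 2w(I_1) = OPT - \eta$. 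The essential idea you are missing is that the adversary must make the \emph{missed optimal weight} reappear as error of the ``predicted non-optimal'' type on an interval the algorithm can no longer take, which requires the follow-up intervals to be longer than, not contained in, the committed interval.
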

\begin{proof}
    Let $I_{1}$ arrive first with $Prd(I_{1}) = 0$. If the algorithm doesn't accept $I_{1}$, no more intervals arrive, and we have that $ALG = 0$, $OPT = w(I_{1})$, and $\eta = w(I_{1})$. If the algorithm accepts $I_{1}$, let two intervals $I_2$ and $I_3$ arrive next, with $w(I_2) = w(I_1)$, $Prd(I_2) = 1$, $w(I_3) = 2w(I_1)$ and $Prd(I_3) = 0$. In this case we have $ALG = w(I_1)$, $OPT = 3w(I_1)$, and $\eta = \eta(I_3) = 2w(I_1)$. In both cases the equality holds. One can repeat this construction for an asymptotic result.
\end{proof}

    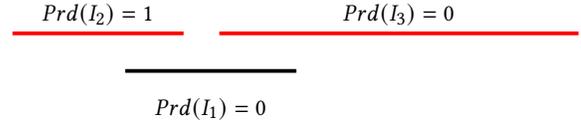
\begin{figure}
	\centering
	
	\begin{tikzpicture}[scale=0.5]

	\node at (-5,-2) {$Prd(I_{1}) = 0$};
	\node[draw=none] (I1a) at (-7.5,-1) {$ $};
	\node[draw=none] (I1b) at (-2.5,-1) {$ $};
	\draw[line width=0.5mm] (I1a) -- (I1b);


	\node[draw=none] (I3a) at (-10.5,0) {$ $};
	\node[draw=none] (I3b) at (-5.5,0) {$ $};
	\node at (-8,0.5) {$Prd(I_{2}) = 1$};
	\draw[color=red,line width=0.5mm] (I3a) -- (I3b);
	
	\node[draw=none] (I4a) at (-5,0) {$ $};
	\node[draw=none] (I4b) at (5,0) {$ $};
	\node at (0,0.5) {$Prd(I_{3}) = 0$};
	\draw[color=red,line width=0.5mm] (I4a) -- (I4b);

	\end{tikzpicture} 
	\caption{Instance of Theorem \ref{theo:prop-naive-neg}, with $w(I_2) = w(I_1), w(I_3) = 2w(I_1)$.}\label{fig:prop-neg-no-rev}
\end{figure}

\begin{corollary}
 Algorithm Naive is optimal for proportional weights in the model of irrevocable acceptances.
\end{corollary}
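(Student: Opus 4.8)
The plan is to obtain this corollary immediately by combining the matching positive and negative bounds already established, so the ``proof'' is essentially bookkeeping rather than new argument. First I would pin down what \emph{optimal} should mean here: the guarantee of \texttt{Naive} is stated as a bound of the form $ALG \ge g(OPT,\eta)$ with $g(OPT,\eta)=OPT-\eta$, so optimality among deterministic online algorithms in the irrevocable model means that no such algorithm can guarantee a strictly stronger bound $ALG \ge g'(OPT,\eta)$ with $g'(OPT,\eta) > OPT-\eta$ for some values of the parameters.

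The positive half is Theorem~\ref{theo:prop-naive-pos}: on every proportional-weights instance and every prediction vector, \texttt{Naive} satisfies $ALG \ge OPT-\eta$. The negative half is Theorem~\ref{theo:prop-naive-neg}: for an arbitrary deterministic algorithm $A$ there is a proportional-weights instance together with predictions on which $A$ realizes exactly $ALG = OPT-\eta$. Hence $A$ cannot guarantee anything better than $OPT-\eta$ as a function of $OPT$ and $\eta$; in particular it cannot beat the guarantee of \texttt{Naive}. Putting the two statements together yields that \texttt{Naive} attains the best possible worst-case performance (measured as a bound in $OPT$ and $\eta$) among deterministic algorithms with irrevocable acceptances, which is exactly the claim.

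The only point that needs a little care, rather than being a genuine obstacle, is making the comparison apples-to-apples: the lower-bound instance of Theorem~\ref{theo:prop-naive-neg} must be expressible under the same prediction-and-error conventions used for Theorem~\ref{theo:prop-naive-pos}, and one should recall (as noted in that theorem's proof) that the construction can be iterated so that the gap $OPT-ALG$ is forced to equal $\eta$ even for arbitrarily large instances, not merely up to an additive constant. With that remark in place there is nothing further to do: the corollary follows directly from Theorems~\ref{theo:prop-naive-pos} and~\ref{theo:prop-naive-neg} and requires no additional reasoning.
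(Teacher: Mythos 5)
Your proposal is correct and matches the paper's intent exactly: the paper offers no separate proof, treating the corollary as an immediate consequence of Theorem~\ref{theo:prop-naive-pos} (the guarantee $ALG \ge OPT-\eta$) and Theorem~\ref{theo:prop-naive-neg} (no deterministic algorithm can guarantee more), which is precisely the combination you describe. Your added care about the meaning of ``optimal'' and the iterability of the lower-bound construction is sensible bookkeeping but introduces nothing beyond what the paper already relies on.
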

\section{Revocable Acceptances}\label{section:rev}
Given the difficulty of the problem(s) in the conventional online model, we now consider the case where acceptances are revocable, but rejections are final, a relaxation of the model that is commonly studied for the problem of interval selection. A new interval can now always be accepted by displacing any intervals in the solution conflicting with it. For unit weights, Borodin and Karavasilis \cite{borodin2023any} give an optimal algorithm that is $2k$-competitive, where $k$ is the number of distinct interval lengths. We will refer to this algorithm of \cite{borodin2023any} as the \textit{BK2K} algorithm. \textit{BK2K} is a greedy algorithm, always accepting a new interval when there is no conflict, and whenever a conflict exists, the new interval is accepted only if it is properly included in an interval currently in the solution. We use that as the base logic for our predictions algorithm, and add one more replacement rule, which accepts a new interval $I$ that is only involved in partial conflicts, if $Prd(I) = 1$. Furthermore, an interval accepted by that rule gets marked, to make sure it cannot be replaced by that rule again. We call this algorithm \texttt{Revoke-Unit} \ref{alg:unw-revoke}. Interestingly, this rule of locally \textit{following the predictions once}, suffices to give us $1$-consistency.

\begin{algorithm}
\caption{\texttt{Revoke-Unit}}\label{alg:unw-revoke}
\begin{algorithmic}
\State $M \gets \emptyset$ \Comment{Set of marked intervals}
\State $S \gets \emptyset$ \Comment{Solution set}
\State On the arrival of $I$:
\State $I_{s} \gets $ Set of intervals currently in the solution conflicting with $I$
\If{$I_{s} = \emptyset$ or ($I_{s}=\{I'\}$ and $I \subset I'$)}
    \If{ $I' \in M$}
        \State $M \gets M \cup \{I\}$
    \EndIf
    \State $S \gets S \cup \{I\} \setminus \{I'\} $\Comment{Take $I$ and discard $I'$ if necessary}
\ElsIf{$I$ is only involved in partial conflicts \textbf{and} $Prd(I) = 1$ \textbf{and} $I_{s} \cap M = \emptyset$}
    \State $S \gets S \cup \{I\} \setminus I_s $\Comment{Take $I$ and discard conflicting intervals}
    \State $M \gets M \cup \{I\}$
\EndIf
\end{algorithmic}
\end{algorithm}

\begin{theorem}
    Algorithm \ref{alg:unw-revoke} achieves $ALG \geq OPT - \eta$.
\end{theorem}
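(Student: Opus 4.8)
My plan is to argue by an online charging argument that extends the injective-map technique of Theorem~\ref{theo:unw-naive-pos} to the revocable setting. Fix an optimal solution $OPT$ against which $\eta$ is measured and, exactly as before, introduce one error element per unit of error, $H=\bigsqcup_{I}H_I$ with $|H_I|=\eta(I)$; it then suffices to exhibit an injection $F:OPT\to ALG\cup H$, whence $ALG+|H|\ge OPT$. Unlike the irrevocable case an algorithm interval may enter and leave the solution repeatedly, so I would track charge dynamically: each current solution interval $I$ carries a charge $\Phi(I)=TC(I)+DC(I)$, where transfer charge is inherited at the moment $I$ is accepted by displacing others and direct charge is added later each time an optimal interval is rejected because of $I$. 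Accordingly $F$ is built online: an optimal interval kept to the end maps to itself; one rejected on arrival with $Prd=0$ maps into its own (size-one) block $H_I$; and one that is rejected with $Prd=1$, or accepted and later displaced, is charged to a solution interval witnessing its loss and then follows that interval's predecessor trace to the trace's head $K\in ALG$.

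The bulk of the work is the bookkeeping that makes $F$ injective, which I would hang on two structural facts about Algorithm~\ref{alg:unw-revoke}. First, a geometric bound: when $K$ is accepted by the partial-conflict/prediction rule it is, by the rule's precondition, only in partial conflicts, and a family of pairwise-disjoint solution intervals each in partial conflict with $K$ has at most two members --- one extending past $K$'s left endpoint, one past its right --- since a third would be properly contained in $K$; hence an application of the prediction rule merges at most two predecessor traces, while the proper-inclusion rule merges just one. Second, a marking invariant: the prediction rule marks every interval it accepts, marked intervals can thereafter be displaced only by the proper-inclusion rule (which propagates the mark), so each predecessor trace contains at most one application of the prediction rule, and a maximal proper-inclusion segment consists of nested intervals and hence carries at most one $OPT$-interval. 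Together these say that the head $K$ of a trace collects charge from a controlled, tree-like family of sub-traces, with at most one $OPT$-interval entering per greedy segment.

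The main obstacle is to show that whenever $K$ would be charged twice --- the two sides of a prediction-rule merge, or a displaced optimal predecessor together with an optimal interval rejected against $K$ --- the surplus is exactly paid for by error. The clean case is $Prd(K)=1$ with $K\notin OPT$: the two offending optimal intervals sit on opposite sides of $K$ (or one contains the other through $K$) and are therefore distinct members of the conflict set $\mathcal{C}(K)$, so $\eta(K)=|\mathcal{C}(K)|-1\ge 1$ and $H_K$ supplies the extra slot; since each prediction-rule interval appears only once in the whole trace structure, its error budget is spent only once. The delicate case is when nothing near the merge is mispredicted (in particular $K\in OPT$): here I would argue the merge simply cannot occur, because tracing the two purported optimal intervals back through their nested greedy segments shows they both overlap $K$, hence overlap each other, contradicting $OPT$ being an independent set --- with the single permitted prediction-rule step folded in by the previous case, this closes an induction on trace depth. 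In particular, with correct predictions no error element is ever used, $F$ injects into $ALG$, and we get $ALG\ge OPT$, i.e.\ $1$-consistency. I would finish by verifying that the invariant $\Phi(I)\le 1+(\text{error already charged to }I)$ survives each of the three acceptance branches of the algorithm, the only nontrivial step being the prediction rule, which is precisely where the error term in the invariant is consumed.
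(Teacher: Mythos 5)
Your proposal follows essentially the same strategy as the paper's proof: an injective map $F:OPT\to ALG\cup H$ maintained online, in which a lost optimal interval either pays with its own error element (when $Prd=0$) or is charged to the budget $H_{P}\cup\{P\}$ of the unique prediction-rule interval $P$ at the origin of the mark responsible for the loss; your marking invariant (at most one prediction-rule acceptance per trace, the mark propagating through proper-inclusion replacements) is exactly the mechanism the paper uses to locate $P$ by tracing back from the marked interval, and your observation that a prediction-rule acceptance merges at most two traces matches the paper's ``at most two partial conflicts'' count. The additional machinery you bring in (trace trees, the potential $\Phi=TC+DC$) belongs to the paper's robustness argument (Theorem \ref{theo:unw-rev-robust}) and is not needed for this bound, but it does no harm.

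One inference as written is invalid: in your ``delicate case'' you conclude that two optimal intervals charged to the tree of an optimal prediction-rule interval $K$ must ``both overlap $K$, hence overlap each other.'' Two intervals overlapping $K$ need not overlap each other --- the two intervals displaced by a prediction-rule acceptance sit on opposite sides of $K$ and are disjoint. The contradiction you want is more direct: each such optimal interval individually conflicts with $K\in OPT$, which already violates the independence of $OPT$; you never need the two of them to conflict with one another. With that repair the case closes. You should also state explicitly the normalization the paper invokes (w.l.o.g.\ no interval of the instance is properly contained in an optimal interval): without it an optimal interval already in the solution can be displaced by the proper-inclusion rule, and a nested greedy segment can contain an optimal interval that is not its final element, neither of which has an error budget to pay for it --- your ``at most one $OPT$-interval per nested segment'' bookkeeping silently relies on that segment's $OPT$-interval being the one displaced by the prediction rule (or being the trace head itself). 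Finally, when an optimal interval with $Prd=1$ is rejected on arrival against two solution intervals, make sure you charge it to the \emph{marked} one, since only that side leads back to a prediction-rule interval with a nonempty budget.
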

\begin{proof}
    We follow the same approach as in the proof of Theorem \ref{theo:unw-naive-pos}, mapping optimal intervals to intervals taken by the algorithm, and to error. The main difference is that because of revoking, this mapping might be redefined throughout the execution of the algorithm. As before, we let $H$ be the set of error elements, and $H_{I} \subseteq H$ be the set of error elements introduced by $\eta(I)$. Let $I_{opt}$ be an optimal interval. We will define an injective mapping $F: OPT \rightarrow ALG \cup H$ as follows: If $I_{opt}$ is taken by the algorithm, it is initially mapped onto itself. If $I_{opt}$ is later replaced, it must be because of a partial conflict (w.l.o.g. no interval is subsumed by an optimal interval) with a new interval $I'$ with $Prd(I') = 1$. In this case, $I_{opt}$ will be mapped to an error element in $H_{I'}$, or if no further optimal intervals that conflict with $I$ are yet to arrive, it will be mapped to $I$. In both, subcases it will never be remapped.\\
    Consider now the case of $I_{opt}$ being rejected upon arrival. This can only happen if it is involved in (at most two) partial conflicts. There are two possible cases. The first case is that $Prd(I_{opt}) = 0$, and therefore $|H_{I_{opt}}| = 1$, in which case we map $I_{opt}$ to the error element of its own prediction. The second case is that $Prd(I_{opt}) = 1$, but at least one of the conflicting intervals was marked. Let $I_{c}$ be one of the marked, partially conflicting intervals. If $I_{c}$ was marked by being taken through a partial-conflict replacement, it means that $Prd(I_{c}) = 1$, and $|H_{I_{c}} \cup \{I_c\}| > 0$, in which case we can map $I_{opt}$ to an element $h_{c} \in H_{I_{c}}\cup \{I_c\}$.\\
    If $I_{c}$ was instead marked by a proper-inclusion-replacement, we trace the original interval that got marked through a partial-conflict-replacement. Call that interval $I^{'}_{c}$. It holds that $I^{'}_{c}$ conflicts with $I_{c}$, and therefore also conflicts with $I_{opt}$. Moreover, for $I^{'}_{c}$ to have been accepted, it must be that $Prd(I^{'}_{c})=1$ and $|H_{I^{'}_{c}} \cup \{I^{'}_{c}\}|>0$. In this case, we map $I_{opt}$ to an element $h_{c'} \in H_{I^{'}_{c}} \cup \{I^{'}_{c}\}$. In conclusion, we have that $ALG + |H| \geq OPT$, and we get the desired bound.
\end{proof}

We note that the performance of algorithm \ref{alg:unw-revoke} on the instance of Theorem \ref{theo:unw-naive-neg} is exactly equal to $OPT - \eta$, and we get the following lemma.

\begin{lemma}
    The performance of algorithm \ref{alg:unw-revoke} cannot be better than $OPT-\eta$.
\end{lemma}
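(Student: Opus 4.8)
The plan is to establish that the preceding theorem is tight: since that theorem already gives $ALG \geq OPT - \eta$ on every instance, it suffices to exhibit one instance together with a prediction vector on which \texttt{Revoke-Unit} attains performance exactly $OPT - \eta$, after which no strictly better additive guarantee is possible. I would use a close variant of the instance of Theorem~\ref{theo:unw-naive-neg}, adapted to the revocable model. A large interval $I_b$ arrives first with $Prd(I_b) = 0$; \texttt{Revoke-Unit} accepts it greedily, because its first acceptance rule fires whenever the current solution is conflict-free, regardless of the prediction. Then two intervals $I_1$ and $I_2$ arrive, each only \emph{partially} conflicting with $I_b$ (so that neither is subsumed by it), mutually disjoint, with $Prd(I_1) = 0$ and $Prd(I_2) = 1$; concretely one may take $I_b = [0,10)$, $I_1 = [-2,3)$, $I_2 = [5,12)$.

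The second step is to trace the algorithm. When $I_1$ arrives, its only conflict is $I_b$, but $I_1 \not\subset I_b$, so the proper-inclusion branch does not apply, and since $Prd(I_1)=0$ the partial-conflict branch does not apply either, so $I_1$ is rejected permanently. When $I_2$ arrives, its only conflict is again $I_b$, which is unmarked; $I_2 \not\subset I_b$ but $Prd(I_2)=1$ and $I_s \cap M = \emptyset$, so the partial-conflict branch fires: $I_2$ replaces $I_b$ and becomes marked. Hence $ALG = \{I_2\}$, while the unique optimum is $\{I_1,I_2\}$, so $OPT - ALG = 1$. For the error, $I_b$ is correctly predicted non-optimal and $I_2$ correctly predicted optimal (each contributes $0$), and $I_1$ is the only mispredicted interval, predicted non-optimal although it lies in $OPT$, contributing $\eta(I_1)=w(I_1)=1$. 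Thus $\eta = 1 = OPT - ALG$, i.e.\ $ALG = OPT - \eta$. Placing $m$ disjoint copies of this gadget on the line yields $OPT = 2m$, $ALG = m$, $\eta = m$, so not even a guarantee of the form $ALG \geq OPT - c\,\eta$ with $c<1$ survives.

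The only real obstacle is choosing the geometry correctly, which is also why the small intervals must overlap $I_b$ only partially rather than be subsumed by it as in the irrevocable lower bound of Theorem~\ref{theo:unw-naive-neg}. If $I_1$ were subsumed by $I_b$, \texttt{Revoke-Unit} would replace $I_b$ by $I_1$ through the proper-inclusion rule and then still accept $I_2$, recovering the whole optimum and destroying the equality. With partial conflicts, every replacement rule is disabled for the $Prd=0$ interval $I_1$, so it is lost irrevocably, whereas the $Prd=1$ interval $I_2$ merely swaps in for $I_b$ without increasing the solution's cardinality. One should just double-check that $I_1$ and $I_2$ are genuinely disjoint (so $OPT$ truly contains both) while each genuinely overlaps $I_b$ without being contained in it (so precisely the intended branches fire); given that, the error bookkeeping is immediate.
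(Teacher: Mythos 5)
Your proof is correct, and it is in fact more careful than the paper's own justification. The paper dismisses this lemma with a one-line remark that \texttt{Revoke-Unit} attains exactly $OPT-\eta$ on the instance of Theorem~\ref{theo:unw-naive-neg}; but on that instance as drawn, $I_1$ and $I_2$ are \emph{subsumed} by $I_{big}$, so the proper-inclusion branch of Algorithm~\ref{alg:unw-revoke} replaces $I_{big}$ by $I_1$, after which $I_2$ arrives conflict-free and is also accepted, giving $ALG = OPT = 2$ with $\eta = 1$ --- i.e.\ the algorithm beats $OPT-\eta$ there. Your modification --- making $I_1$ and $I_2$ overlap $I_{big}$ only partially so that the proper-inclusion rule is disabled, the $Prd=0$ interval $I_1$ is irrevocably rejected, and the $Prd=1$ interval $I_2$ only swaps in for $I_{big}$ --- is exactly the repair needed, and your trace through the algorithm's branches, the error accounting ($\eta = \eta(I_1) = 1$, $OPT = 2$, $ALG = 1$), and the $m$-fold replication ruling out any $ALG \geq OPT - c\eta$ with $c<1$ are all sound. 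The final remark in your last paragraph, explaining \emph{why} the subsumed-interval geometry of Theorem~\ref{theo:unw-naive-neg} cannot be reused verbatim, is precisely the point the paper glosses over; if anything, the paper's statement preceding the lemma should be read as referring to a partial-conflict variant of that instance such as yours.
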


We next show that the robustness of algorithm \texttt{Revoke-Unit} nearly matches the optimal online guarantee.

\begin{theorem} \label{theo:unw-rev-robust}
    With at most $k$ distinct interval lengths, algorithm \ref{alg:unw-revoke} is $(2k+1)$-robust.
\end{theorem}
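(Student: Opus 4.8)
The plan is to adapt the charging argument that establishes the $2k$-competitiveness of the base \textit{BK2K} algorithm in \cite{borodin2023any}, accounting for the one extra replacement rule. Since robustness must hold for arbitrary (possibly adversarial) predictions, I cannot use error at all; instead I must charge each optimal interval to algorithm intervals only. The key structural observation is that \texttt{Revoke-Unit} behaves exactly like \textit{BK2K} except that it may additionally accept a partially-conflicting interval $I$ when $Prd(I)=1$ and no conflicting interval is marked, and such an $I$ is then marked and can never again be displaced by that same rule. So the algorithm still maintains the invariant that, restricted to each length class, its accepted intervals are "locally maximal" in the \textit{BK2K} sense, and every accepted interval has a predecessor trace along which lengths are non-increasing under proper-inclusion replacements, with at most one partial-conflict replacement step (the marking step) per trace.

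First I would recall/restate the $2k$ charging scheme: group optimal intervals by which of the $k$ length classes they belong to, and for each class show that the optimal intervals of that class can be charged to algorithm intervals with total multiplicity at most $2$ per algorithm interval (one "left" charge and one "right" charge, exploiting that intervals of a fixed length that are disjoint can overlap a given point at most... and that BK2K's greedy/inclusion rules guarantee a surviving algorithm interval near every optimal one). Summing over the $k$ classes gives $OPT \le 2k \cdot ALG$. Then I would isolate precisely where this scheme can fail for \texttt{Revoke-Unit}: an optimal interval $I_{opt}$ that \textit{BK2K} would have "caught" might instead have been displaced by the extra partial-conflict rule, or might have arrived and been rejected because a conflicting interval was marked. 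In either situation I would show $I_{opt}$ can be charged to the marked interval $I'$ responsible (or, chasing through proper-inclusion replacements as in the previous proof, to the original marked ancestor $I'_c$ in its predecessor trace). Because each marked interval is created by exactly one application of the rule and is never un-created, and because the marking rule fires at most along one edge of any predecessor trace, these "extra" charges add at most one further unit of charge per algorithm interval globally — yielding $OPT \le (2k+1)\cdot ALG$.

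The main obstacle I anticipate is bounding the \emph{interaction} between the extra marking charges and the standard $2k$ charges: I must ensure that an algorithm interval receiving its two "BK2K" charges from some length class does not simultaneously receive unboundedly many "marking" charges. The argument should be that a marking charge to an interval $J$ is incurred only when $J$ (or its marked ancestor) was the unique marked partially-conflicting obstacle to some $I_{opt}$; since the $I_{opt}$'s in $OPT$ are pairwise disjoint and each marked interval partially conflicts with at most two disjoint intervals on each side but the marked status together with the "catch" structure pins this down to a single extra unit — I would make this precise by a careful case analysis on whether $I_{opt}$ was displaced while in the solution versus rejected on arrival, mirroring the three cases of the preceding theorem's proof but substituting the marked interval (rather than an error element) as the charge target. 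Finally I would verify the bound is additive and not multiplicative in $k$, so the robustness is $(2k+1)$ and not, say, $4k$.
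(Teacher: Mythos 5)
Your overall strategy coincides with the paper's: keep the BK2K predecessor-trace charging, observe that because marks propagate under proper-inclusion replacement the partial-conflict rule fires at most once along any predecessor trace, and argue that this single extra step costs only one additional unit of charge. However, the decisive quantitative step is exactly the one you defer (``I would make this precise by a careful case analysis''), and the justification you sketch for it is not the right one. A priori the extra partial-conflict step lengthens the trace by one, and since each step can add $2$ to the charge (one rejected optimal interval per side), the naive count gives $2(k+1)=2k+2$, not $2k+1$. Your appeal to ``each marked interval is created by exactly one application of the rule'' and to the disjointness of the $I_{opt}$'s does not by itself recover the missing unit.

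The paper closes this gap with a specific geometric observation at the partial-conflict replacement itself: if $I_2$ displaces $I_1$ via the predictions rule, then $I_1$ and $I_2$ overlap, so an optimal interval partially conflicting on the shared side can be directly charged to only one of them; hence $DC(I_1)+DC(I_2)\leq 3$ rather than $4$, i.e.\ $\Phi(I_2)\leq TC(I_1)+3$. Combining this single ``$+3$ for two steps'' with the at-most-$k$ proper-inclusion steps, each contributing $+2$, yields the $2k+1$ bound. You should also note that your framing of the base $2k$ argument as a per-length-class charging with multiplicity $2$ is not how the bound is actually obtained (and your own ellipsis suggests uncertainty there); the bound is per algorithm interval, accumulated along its predecessor trace, whose proper-inclusion portion has length at most $k$ because lengths strictly decrease. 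Finally, for optimal intervals rejected on arrival there is no need to chase back to a marked ancestor: they are simply directly charged to the conflicting intervals currently in the solution, and those charges are transferred forward upon later replacements.
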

\begin{proof}
    We use a charging argument and show that an interval taken by the algorithm can be charged by at most $2k+1$ optimal intervals. As soon as an optimal interval arrives, we map it to an interval already taken by the algorithm or itself. When an interval is replaced during the execution, all optimal intervals charged to it up to that point, will now be charged to the new interval that was accepted. We build upon the proof of Theorem 3.2 in \cite{borodin2023any}. In the case of the \textit{BK2K} algorithm (\cite{borodin2023any}), it is true that for every predecessor trace $\mathcal{P}$, and consecutive intervals $(I_i,I_{i+1}) \in \mathcal{P}$, $\Phi(I_{i+1}) \leq \Phi(I_i) + 2$, and the length of every predecessor trace is at most $k$. While the former is still true for algorithm \texttt{Revoke-Unit}, the latter is not, and that is because we have an additional replacement rule. However, we argue that for every $I_i \in \mathcal{P}$, if $I_j,\; j > i$ is the next interval in the trace that was accepted through proper-inclusion, it is true that $TC(I_j) \leq TC(I_i) + 3$.\\
    
    Figure \ref{fig:pcr-charge} shows how to maximize charge on the event of a partial-conflict replacement. Before being replaced by a partially-conflicting interval, $I_{1}$ can be directly charged by at most two optimal intervals ($I^{1}_{opt}, I^{2}_{opt}$), one on each side. After $I_{2}$ replaces $I_{1}$, it can also be directly charged by two optimal intervals, but only if $I^{2}_{opt}$ was not charged to $I_{1}$ earlier. In other words, if $I_{2}$ is directly charged by two new intervals, it means that $I_{1}$ was directly charged by at most one, concluding that $\Phi(I_2) \leq TC(I_1) + 3$.

    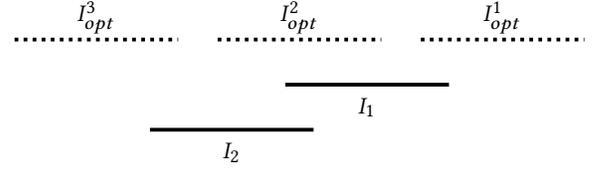
\begin{figure}[H]
	\centering
	
	\begin{tikzpicture}[scale=0.6]

	\node at (-5,-1.5) {$I_{2}$};
	\node[draw=none] (I1a) at (-7,-1) {$ $};
	\node[draw=none] (I1b) at (-3,-1) {$ $};
	\draw[line width=0.5mm] (I1a) -- (I1b);


	\node[draw=none] (I2a) at (-5.5,1) {$ $};
	\node[draw=none] (I2b) at (-1.5,1) {$ $};
	\node at (-3.5,1.5) {$I^{2}_{opt}$};
	\draw[dotted,line width=0.5mm] (I2a) -- (I2b);
	
	\node[draw=none] (I3a) at (-10,1) {$ $};
	\node[draw=none] (I3b) at (-6,1) {$ $};
	\node at (-8,1.5) {$I^{3}_{opt}$};
	\draw[dotted,line width=0.5mm] (I3a) -- (I3b);
	
	\node[draw=none] (I4a) at (-4,0) {$ $};
	\node[draw=none] (I4b) at (0,0) {$ $};
	\node at (-2,-0.5) {$I_{1}$};
	\draw[line width=0.5mm] (I4a) -- (I4b);
	
	\node[draw=none] (I5a) at (-1,1) {$ $};
	\node[draw=none] (I5b) at (3,1) {$ $};
	\node at (1,1.5) {$I^{1}_{opt}$};
	\draw[dotted,line width=0.5mm] (I5a) -- (I5b);

	\end{tikzpicture} 
	\caption{Maximum charge through partial-conflict replacement.}\label{fig:pcr-charge}
\end{figure}

Finally, notice that because the mark of an interval carries over when it is replaced, the event of a partial-conflict replacement can occur at most once in each predecessor trace,
and excluding at most one subsequence $(I_i, I_r, I_j) \in \mathcal{P}$ where $TC(I_j) \leq TC(I_i) + 3$, it holds that for $(I_b,I_{b+1}) \in \mathcal{P}$, $\Phi(I_{b+1})\leq \Phi(I_b) + 2$, giving us a worst case competitive ratio of $2k+1$. 
\end{proof}

\begin{corollary}
    With at most $k$ distinct interval lengths, and predictions with total error $\eta$, Algorithm Revoke-Unit achieves $ALG \geq \max\{OPT-\eta , \frac{OPT}{2k+1}\}$.
\end{corollary}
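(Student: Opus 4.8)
The plan is to observe that this statement is an immediate consequence of the two preceding results about Algorithm \texttt{Revoke-Unit}, since both bounds hold simultaneously for the same deterministic, parameter-free execution of the algorithm on any given instance and prediction vector.

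First I would invoke the earlier theorem stating that Algorithm \texttt{Revoke-Unit} achieves $ALG \geq OPT - \eta$: this holds for every instance of interval selection with unit weights and every prediction vector $\overrightarrow{\textbf{p}}$, where $\eta$ is the total error of that vector. Next I would invoke Theorem~\ref{theo:unw-rev-robust}, which shows that when there are at most $k$ distinct interval lengths, Algorithm \texttt{Revoke-Unit} is $(2k+1)$-robust, i.e.\ $\frac{OPT}{ALG} \leq 2k+1$, equivalently $ALG \geq \frac{OPT}{2k+1}$, again for every instance regardless of the predictions' accuracy.

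Since Algorithm \texttt{Revoke-Unit} is deterministic and takes no tunable parameter, a fixed instance together with a fixed prediction vector determines a single run, and the two inequalities above both refer to the value $ALG$ produced by that same run. Combining them yields $ALG \geq \max\{OPT - \eta,\ \frac{OPT}{2k+1}\}$, as claimed. The only point worth checking is that the hypotheses of the two theorems are mutually compatible: the robustness statement imposes no restriction on the predictions, and the $OPT-\eta$ guarantee imposes no restriction on the number of distinct interval lengths, so the two can be applied together without conflict. Consequently there is no real obstacle here — the corollary is essentially a bookkeeping combination of the two established guarantees.
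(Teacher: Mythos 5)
Your proposal is correct and matches the paper's (implicit) reasoning: the corollary is stated without proof precisely because it follows by combining the $ALG \geq OPT - \eta$ guarantee with the $(2k+1)$-robustness of Theorem~\ref{theo:unw-rev-robust}, both of which apply to the same deterministic run. Your additional check that the two hypotheses are compatible is sound bookkeeping and nothing more is needed.
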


Notice how we can choose not to carry over the mark when proper-inclusion replacement occurs, and get a $3k$-robust algorithm. Such an algorithm is prone to follow the prediction more often, and it can outperform \texttt{Revoke-Unit} for some small values of error caused by adversarial predictions.\\\\
We now look at the case of proportional weights. In the conventional online setting, Garay et al. \cite{garay1997efficient} give a $2\phi + 1  \approx 4.236-$competitive algorithm, while Tomkins \cite{tomkins1995lower} gives a matching lower bound. They call their optimal algorithm \texttt{LR} (for \textit{length of route}), and we include it here for completeness. Unlike the case of unit weights, we now want to accept intervals that occupy as much of the line as possible. Algorithm \texttt{LR} works greedily by always accepting a new interval with no conflicts, and when there are conflicts, it accepts the new interval if its length is at least $\phi$ times greater than the largest conflicting interval. More generally, using parameter $\beta \geq \phi$, we have the following lemma:
\begin{lemma}[Garay et al. \cite{garay1997efficient}]
    Algorithm \texttt{LR} with parameter $\beta \geq \phi$ is $(2\beta + 1)$-competitive for the problem of interval selection with proportional weights.
\end{lemma}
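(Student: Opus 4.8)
The plan is a charging argument of the same flavour as the earlier proofs, except that the charge is made against the \emph{final} solution of \texttt{LR} and is routed forward along replacement chains, and the quantitative core is a geometric series with ratio $1/\beta$ combined with the elementary fact that on the real line any interval can partially overlap at most two members of a pairwise-disjoint family. First I would record the two structural facts about \texttt{LR}. When \texttt{LR} accepts an interval $I$ of length $\ell$ and discards the conflicting set $R$, the acceptance rule forces $\ell \ge \beta \cdot w(L)$ for the longest $L \in R$, hence $\ell \ge \beta \cdot w(R')$ for every $R' \in R$; moreover at most two members of $R$ overlap $I$ only partially (one on each side) while every other member is contained in $I$, so $\sum_{R' \in R} w(R') \le \ell + 2\ell/\beta$. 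Chaining this: for any $J$ in the final solution with predecessor trace $\mathcal{P}=(P_1,\dots,P_m=J)$ we get $w(P_i) \le \beta^{-(m-i)} w(J)$, so the total weight ever destroyed ``on the way to $J$'' is controlled by a convergent geometric series in $w(J)$.

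Next I would classify each $O \in OPT$ by the reason it is absent from the final solution. If $O$ survives, charge $w(O)$ to itself. If $O$ was accepted by \texttt{LR} and later discarded, then following the (unique) chain of replacers forward from $O$ reaches exactly one final interval $J$, and we charge $w(O)$ along that chain. If $O$ was rejected on arrival, then at that instant $O$ conflicted with a contiguous block $B$ of current solution intervals with $w(O) < \beta\, w(L) \le \beta \sum_{J'\in B} w(J')$, and each $J' \in B$ either survives or is later replaced, so $w(O)$ routes forward to final intervals. Summing all charges that land on a fixed final interval $J$ — its own weight, the discarded interiors along its trace, and the discarded partial-overlap pieces on its two flanks along its trace — and substituting $w(P_i) \le \beta^{-(m-i)} w(J)$ makes the geometric sums collapse to a bound of the form $(2\beta+1)\,w(J)$; adding over $J$ yields $OPT \le (2\beta+1)\,ALG$. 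This is, in essence, the analysis of Garay et al. \cite{garay1997efficient}.

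The step I expect to be the real obstacle is the bookkeeping for the ``rejected on arrival'' optimal intervals together with making the forward routing of charges well defined and injective enough: a single acceptance can discard several intervals at once, so a predecessor trace branches when traced backwards, and one must commit to a consistent forward assignment so that no final interval is over-charged. A secondary subtlety is pinning the constant to exactly $2\beta+1$ rather than a weaker combination of the geometric sums — this requires measuring each destroyed piece against the length of its \emph{immediate} displacer (not against $w(J)$), so that the two flank series and the interior series combine into the $2\beta$ term and the surviving interval contributes the $+1$.
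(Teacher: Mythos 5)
First, a point of reference: the paper does not prove this lemma itself --- it is stated as a black-box citation of Garay et al.\ \cite{garay1997efficient} --- so there is no in-paper proof to compare yours against. Judged on its own terms, your outline assembles the right ingredients: the geometric decay $w(P_i)\le \beta^{-(m-i)}w(J)$ along a replacement chain, the fact that at most two members of a pairwise-disjoint displaced set can partially overlap the new interval, and the forward routing of charge into the final solution. You also correctly flag the two soft spots (branching of traces, and pinning the exact constant). But there is a decisive gap that goes beyond bookkeeping: nowhere in your argument is the hypothesis $\beta\ge\phi$ used, and it must be used. Tomkins' lower bound \cite{tomkins1995lower}, cited in this same section, shows that no deterministic algorithm in this model beats $2\phi+1\approx 4.236$. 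If your charging scheme really collapsed to $(2\beta+1)\,w(J)$ for every $\beta>1$, it would certify (say) a $3.4$-competitive algorithm at $\beta=1.2$, contradicting that lower bound. So the step you wave through --- ``the geometric sums collapse to $(2\beta+1)w(J)$'' --- cannot be correct as stated; a correct completion must produce a constraint equivalent to $\beta^2\ge\beta+1$, i.e.\ $\beta\ge\phi$ (for instance, an inequality of the shape $\beta^2/(\beta-1)\le 2\beta+1$), and your sketch never generates one.

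Concretely, here is why the naive summation overshoots. Because one acceptance displaces a \emph{set} of intervals, the backward structure rooted at a final interval $J$ is a tree, and the total weight of a single level of that tree is bounded only by $(1+2/\beta)\,w(J)>w(J)$; hence the induction ``$\Phi(J)\le c\sum_{R'\in R}w(R')+DC(J)\le c\,w(J)$'' does not close, since the transfer term does not contract. If instead you bound the charge on $J$ geometrically --- the tree spans a region of length at most $\bigl(1+\tfrac{2}{\beta-1}\bigr)w(J)$, plus at most one rejected optimal interval overhanging each flank, each of weight up to $\beta\,w(J)$ --- you get roughly $\bigl(1+\tfrac{2}{\beta-1}+2\beta\bigr)w(J)$, which at $\beta=\phi$ exceeds $7\,w(J)$, not $(2\phi+1)w(J)\approx 4.24\,w(J)$. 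The actual argument of Garay et al.\ must therefore trade the interior and flank contributions against each other (an optimal interval cannot simultaneously be counted as interior mass and as a flank overhang, and the disjointness of $OPT$ must be exploited across the whole tree rather than level by level); that trade-off is exactly where $\beta\ge\phi$ enters. Your proposal identifies the right scaffolding but, as written, neither establishes the constant $2\beta+1$ nor could it, since it proves too much.
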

\begin{algorithm}
\caption{LR \cite{garay1997efficient}}\label{alg:garay_prop}
\begin{algorithmic}
\State Parameter $\beta = \phi$ \Comment{optimal value for parameter $\beta$}
\State On the arrival of $I$:
\State $I_{s} \gets $ Set of intervals currently in the solution conflicting with $I$
\If{ $w(I) > \beta \cdot \max\{w(J)\; : \; J\in I_s\}$}
    \State Accept $I$ and displace conflicts
    \State Return
\EndIf
\end{algorithmic}
\end{algorithm}
Instead of using algorithm \texttt{LR} as the base of our predictions algorithm, we consider a slightly modified version, which we refer to as \texttt{LR$'$}, and which compares the weight of the new interval to the sum of the weights of the conflicting intervals, instead of looking only at the longest interval. Although we do not know the exact performance of algorithm \texttt{LR$'$} in the online model, we conjecture it is also $(2\phi + 1)$-competitive.\\\\
In trying to utilize predictions in the case of proportional weights, we first make the following observations:
\begin{observation}
    $1$-consistency is unattainable while maintaining bounded robustness.
\end{observation}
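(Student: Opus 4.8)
The plan is to show that $1$-consistency and $\zeta$-robustness are incompatible for every constant $\zeta$, by exhibiting, for each $\zeta$, an adaptive adversary that defeats any deterministic algorithm claiming both. The construction relies on a phenomenon specific to proportional weights: a single long interval can be dominated by one short protruding interval together with a tight packing of the rest of the long interval's span, so that ``following the prediction'' and grabbing the short interval forces an algorithm to irrevocably discard the long one.

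Fix an integer $N > 2\zeta$. The adversary first releases $B = [0, N)$ with $Prd(B) = 0$, and then $s = [-1, 1)$ with $Prd(s) = 1$, noting that $s$ and $B$ are in partial conflict. The first step is to argue that a $1$-consistent algorithm is forced to accept $s$ on its arrival. If it instead rejects $s$, consider the completion in which the adversary next releases the unit intervals $u_i = [i, i+1)$ for $1 \le i \le N-1$, each carrying prediction $1$. In the resulting instance the unique optimal solution is $\{s, u_1, \dots, u_{N-1}\}$, of weight $N + 1 > N = w(B)$; hence $B \notin OPT$ and every prediction is accurate, so by $1$-consistency the algorithm's final solution must equal this optimum and in particular contain $s$ --- impossible, since $s$ was rejected and rejections are final. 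Therefore the algorithm accepts $s$, and since $s$ conflicts with $B$ this displaces $B$, leaving the algorithm holding exactly $\{s\}$, of weight $2$.

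Now the adversary simply stops. The realized instance is $\{B, s\}$, whose optimum is $\{B\}$ of weight $N$, while the algorithm holds only $s$; thus $OPT / ALG = N/2 > \zeta$, contradicting $\zeta$-robustness. The point that makes the argument work is that the algorithm's behaviour on the shared prefix $B, s$ is fixed, so the same prefix can be extended in two incompatible ways --- the packed continuation to invoke $1$-consistency, the empty one to invoke robustness. As $\zeta$ was arbitrary, no $1$-consistent deterministic algorithm can have bounded robustness.

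I expect the main delicate point to be the design of the accurate completion: it must make the long interval $B$ strictly suboptimal while keeping \emph{every} prediction, including $Prd(B) = 0$, correct. This is exactly where proportional weights are essential --- under unit weights the analogous move would leave the algorithm holding an interval worth as much as $B$, and no contradiction would arise --- and it is also the reason to let $s$ stick out past $B$'s left endpoint, so that the packing strictly beats $B$ with no $\varepsilon$-perturbation of lengths. The only other point to handle carefully is the ``rejections are final'' aspect of the model: once the algorithm accepts $s$ it has irrevocably given up $B$, and once it declines $s$ it has irrevocably given up $s$, so a single adaptive adversary can play the shared prefix and then pick whichever continuation hurts the choice the algorithm made.
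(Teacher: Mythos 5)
Your proof is correct and takes essentially the same approach as the paper: force a $1$-consistent algorithm to trade a long interval for an arbitrarily shorter predicted-optimal one (here by threatening a completion in which the short interval belongs to the unique optimum and all predictions are accurate), then stop the instance to break robustness. Your version is simply a fully worked-out form of the paper's two-sentence sketch, making explicit the packing continuation that justifies why the swap is forced.
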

\begin{proof}
    To be $1$-consistent, the algorithm must be able to replace an interval with an arbitrarily smaller one that is part of the optimal solution. The adversary could then stop the instance, forcing arbitrarily bad robustness.
\end{proof}

\begin{observation}
    In order to have bounded robustness, it must be that a new interval that is sufficiently large (small\footnote{More accurately, an interval reducing ALG sufficiently much. }) must always be accepted (rejected).
\end{observation}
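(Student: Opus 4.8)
The plan is to prove both halves by contraposition: assume the algorithm is $c$-robust for some finite constant $c$, and show that then a \emph{sufficiently large} interval — one whose weight exceeds $c$ times the total weight of the intervals it conflicts with in the current solution — must be accepted, and that a \emph{sufficiently small} one — one whose acceptance would shrink the current solution's weight by more than a factor of $c$ — must be rejected. Both arguments follow the template of the preceding observation's proof: let the adversary drive the algorithm into the offending configuration and then halt the instance immediately, so that $ALG$ equals the weight of the current solution while $OPT$ is bounded below by the weight of a single interval (or a single independent subset) the algorithm has just discarded.

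For the ``large'' direction, suppose for contradiction that for every threshold $t$ there is a reachable state with solution $S$ and conflict set $I_s \subseteq S$ for an arriving interval $I$ with $w(I) > t\cdot\sum_{J\in I_s} w(J)$, yet the algorithm rejects $I$. The adversary plays a prefix producing this state, presents $I$, and stops; then $OPT \ge w(I)$ while $ALG = w(S)$. To make this ratio genuinely unbounded one must also control $w(S)$, and the key point is that the adversary may restrict attention to instances in which $S$ consists only of intervals that pairwise overlap (e.g.\ all contained in the span of $I$), so that $w(S) \le \sum_{J\in I_s} w(J) < w(I)/t$ and hence $OPT/ALG > t$. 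Choosing $t > c$ contradicts $c$-robustness.

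For the ``small'' direction, suppose for every $t$ the algorithm accepts some arriving interval $I$ by displacing a set $D$ with $\sum_{J\in D} w(J) > t\cdot w(I)$. Again the adversary engineers the prefix so that $D$ is the entire current solution (feeding the pairwise-disjoint intervals of $D$, which a boundedly robust algorithm must retain), then feeds $I$ and halts. Now $ALG = w(I)$ but $OPT \ge \sum_{J\in D}w(J) > t\cdot w(I)$, so $OPT/ALG > t$, again contradicting $c$-robustness once $t > c$. Combining the two directions yields the statement, with the implicit threshold tied to the robustness constant.

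The step I expect to be the main obstacle is the reachability/normalization point flagged above: arguing that the adversarial prefix producing the offending local configuration can be taken to have a ``light'' solution (large case) or a solution equal to the displaced set (small case). One clean way around it is to avoid reaching an \emph{arbitrary} such configuration and instead observe directly that if \emph{no} finite threshold works, then in particular the algorithm rejects arbitrarily heavy intervals conflicting only with a bounded-weight, pairwise-overlapping gadget the adversary has just planted — which is exactly the configuration needed — and symmetrically for the small case. Everything after that is the now-routine ``present the bad interval, then stop'' calculation.
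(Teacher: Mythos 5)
The paper states this observation without any proof at all---it is treated as immediate from the same ``present the bad interval, then stop the instance'' reasoning used for the preceding observation---so there is no authorial argument to compare against; your proposal simply supplies the missing justification, and it does so correctly. The one place where you add genuine content beyond the routine calculation is the normalization step you flag yourself: in the ``large'' direction, $w(I) > t\cdot\sum_{J\in I_s}w(J)$ does \emph{not} by itself force $OPT/ALG > t$, because the current solution $S$ may contain heavy intervals disjoint from $I$ that inflate $ALG$; your restriction to adversarial prefixes in which $S$ is a light, pairwise-overlapping gadget (so that $w(S)\le\sum_{J\in I_s}w(J)$) is exactly what is needed, and it also explains why the observation should be read as a design principle quantified over the planted-gadget configurations rather than literally over every reachable state. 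Similarly, in the ``small'' direction your remark that a $c$-robust algorithm need only retain a $1/c$ fraction of a disjoint prefix (not all of it) is harmless, since the ratio $OPT/ALG \ge w(S)/w(I) > t$ already follows from whatever subset $S$ the algorithm did retain. In short: the argument is correct, matches the footnote's clarification that ``small'' really means ``reducing $ALG$ sufficiently much,'' and is more careful than the paper, which asserts the claim without proof.
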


\begin{definition}[$\alpha-$increasing]
    An $\alpha$-increasing algorithm never accepts a new conflicting interval
that is less than $\alpha$ times the longest interval it conflicts with.
\end{definition}
\begin{lemma}
    An $\alpha$-increasing algorithm (greedy or non-greedy), cannot be better than $(2\alpha +1)$-consistent.
\end{lemma}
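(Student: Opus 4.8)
The plan is to prove a matching adversarial lower bound: given any $\alpha$-increasing algorithm $A$ (and working, as throughout this discussion, with algorithms that are bounded-robust), I will show that for every $\varepsilon>0$ there is an instance together with \emph{accurate} predictions on which $OPT \ge (2\alpha+1-\varepsilon)\cdot ALG$. Since accurate predictions is exactly the consistency regime, this shows $A$ cannot be $c$-consistent for any $c<2\alpha+1$. The adversary works in two phases: a \emph{forcing phase} that makes $A$ end up holding one ``large'' interval $K$ that it can keep out of the optimal solution, and a \emph{trap phase} that releases short intervals which an $\alpha$-increasing algorithm is obliged to reject but whose total weight approaches $(2\alpha+1)\,w(K)$.

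In the forcing phase the adversary releases a nested family $K_1\subsetneq K_2\subsetneq\cdots\subsetneq K_N$, all sharing a common center, with $|K_{i+1}|\ge\alpha|K_i|$ and $Prd(K_i)=0$ for every $i$, where $|K_N|$ is enormous relative to $A$'s robustness factor. Since a bounded-robust algorithm must accept every sufficiently long interval regardless of its prediction (this is the content of the observation above: if it could reject an arbitrarily long interval, placing that interval in the optimum blows up the ratio), at the end of the phase $A$'s solution is a single interval $K$ from this family — the $K_i$ are pairwise conflicting, so it cannot be more than one — and $L:=w(K)$ can be made as large as we wish. (If $A$ instead refuses everything it is not bounded-robust, and adding one $Prd=1$ interval already makes it fail even constant consistency, so that case needs no further work.)

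In the trap phase the adversary, having seen which $K$ is held, presents intervals all carrying prediction $1$: a left \emph{overhang} overlapping $K$ on a negligible left sliver and extending a length just below $\alpha L$ to the left; a symmetric right overhang; and one interval filling the interior of $K$, of length $\approx L$. These are placed to be pairwise disjoint while each conflicts with $K$; because $K$ sits centered inside $K_N$, the two overhangs can be made to miss every other optimal interval. Declaring $OPT$ to be exactly these trap intervals makes all predictions accurate: $OPT$ consists precisely of the $Prd=1$ intervals, and every $K_i$ conflicts with the interior optimal interval and carries $Prd=0$. Since $A$ is $\alpha$-increasing and holds $K$ of weight $L$, it must reject each trap interval (each conflicts with $K$ and has weight $<\alpha L$), so it ends with $ALG=L$ while $OPT\approx \alpha L + L + \alpha L=(2\alpha+1)L$; letting the overhang lengths tend to $\alpha L$ from below yields the bound.

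The main obstacle is the tension between accuracy and trapping: to spring the trap we need $A$ to be holding an interval that is \emph{not} optimal, which a sensible algorithm will never do unprompted — this is exactly why the forcing phase, and the appeal to bounded robustness to make $A$ swallow a long predicted-non-optimal interval, are unavoidable. The second delicate point is squeezing out the factor $2$ rather than merely $\alpha+1$: this needs overhangs on both sides of $K$, each as long as possible but still short enough to be rejected (hence strictly below $\alpha L$, which is precisely why the statement reads ``cannot be better than'' and is not an exact impossibility), and this in turn dictates that the nested family be centered so the interior optimal interval never collides with the overhangs.
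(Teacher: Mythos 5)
Your trap phase is exactly the paper's construction: the paper releases $I_1$, then two partially conflicting intervals of weight $\alpha\, w(I_1)-\epsilon$ on either side and one subsumed interval of weight $w(I_1)-2\epsilon$, so that the $\alpha$-increasing algorithm can never replace $I_1$ while the optimal solution $\{I_2,I_3,I_4\}$ has weight $(2\alpha+1)w(I_1)-4\epsilon$. You are also right to flag the step the paper leaves implicit, namely why the algorithm is holding a non-optimal interval in the first place; invoking bounded robustness to force acceptance of a first interval carrying $Prd=0$ (if it is rejected while the solution is empty, the adversary stops and the ratio is unbounded) is a legitimate justification, in the spirit of the observations preceding the lemma.

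However, your forcing phase as specified has a genuine gap. With growth factor only $|K_{i+1}|\ge\alpha|K_i|$, a bounded-robust algorithm is \emph{not} obliged to keep upgrading: if it holds $K_i$ and rejects $K_{i+1}$, stopping the instance yields a ratio of only about $\alpha$, which a $c$-robust algorithm with $c>\alpha$ can afford, so the algorithm may stall at some $K_i$ with $i\ll N$. But then the rejected intervals $K_j$ with $j>i$ remain in the instance with weight up to $\alpha^{N-i}w(K_i)\gg(2\alpha+1)w(K_i)$, and each of them conflicts with your three trap intervals; the trap is then no longer a maximum-weight independent set, so you cannot declare it to be $OPT$, and the predictions you committed to (namely $Prd(K_j)=0$) are no longer accurate --- the argument falls out of the consistency regime entirely. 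The fix is immediate and shows the nested family is unnecessary: release a single interval with $Prd=0$ (bounded robustness already forces its acceptance when the solution is empty), or, if you keep the chain, make the per-step growth factor exceed the robustness constant so that the algorithm provably ends up holding $K_N$, the unique heaviest non-trap interval.
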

\begin{proof}
    Let an interval $I_1$ arrive first. Let $I_2$ and $I_3$ be intervals that partially conflict with $I_1$ on either side, and $w(I_2) = w(I_3) = \alpha \cdot w(I_1) - \epsilon$. Let $I_4$ with $w(I_4) = w(I_1)-2\epsilon$ be an interval that is fully subsumed by $I_4$. This instance is depicted in figure \ref{fig:neg-alpha-increasing}. The algorithm will never replace $I_1$, while the optimal solution is made of $\{I_2,I_3,I_4\}$.
    \begin{figure}[h] 
        \centering
        \begin{tikzpicture}[scale=0.45]
	
	\node at (0,0.5) {$I_1$};
	\node[draw=none] (I1a) at (-3,0) {$ $};
	\node[draw=none] (I1b) at (3,0) {$ $};
	\draw[line width=0.5mm] (I1a) -- (I1b);

 \node at (-6.3,-0.5) {$I_2$};
	\node[draw=none] (I2a) at (-9.3,-1) {$ $};
	\node[draw=none] (I2b) at (-2.3,-1) {$ $};
	\draw[line width=0.5mm] (I2a) -- (I2b);

  \node at (6.3,-0.5) {$I_3$};
	\node[draw=none] (I2a) at (2.3,-1) {$ $};
	\node[draw=none] (I2b) at (9.3,-1) {$ $};
	\draw[line width=0.5mm] (I2a) -- (I2b);

	\node[draw=none] (I5a) at (-2.5,-1) {$ $};
	\node[draw=none] (I5b) at (2.5,-1) {$ $};
	\node at (0,-1.5) {$I_{4}$};
	\draw[line width=0.5mm] (I5a) -- (I5b);

	\end{tikzpicture} 
        \caption{Consistency bound for $\alpha$-increasing algorithms.}
        \label{fig:neg-alpha-increasing}
    \end{figure}
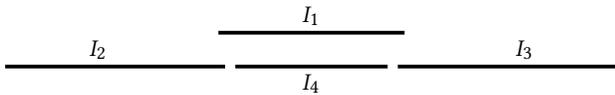
\end{proof}
Algorithm \texttt{LR} is a $\phi$-increasing algorithm, while the algorithm by Woeginger \cite{woeginger1994line} for the real-time model is $2$-increasing. Our predictions algorithm \ref{alg:prop-revoke2} \texttt{Revoke-Proportional} is $1$-increasing. The algorithm works like \texttt{LR$'$}, with one additional replacement rule that accepts a new interval that is predicted to be optimal, even if it is not sufficiently larger than what it conflicts with. More precisely, if a new interval is predicted to be optimal and is at least as big as the sum of the weights of the intervals it conflicts with, and none of the conflicting intervals were predicted to be optimal, it will be accepted through the predictions rule. The algorithm takes a parameter $\lambda > 1$, which can be thought of as an indicator of how much the predictions are trusted. As $\lambda$ increases, the consistency bound improves.

\begin{algorithm}
\caption{\texttt{Revoke-Proportional} {\hfil Parameter: $\lambda > 1$ }}\label{alg:prop-revoke2}
\begin{algorithmic}
\State On the arrival of $I$:
\State $I_{s} \gets $ Set of intervals currently in the solution conflicting with $I$
\State Let $w_c = \sum_{J \in I_s} w(J)$ \Comment{Total weight of conflicting intervals}
\If{ $w(I) \geq \lambda\cdot w_c$} \Comment{Main replacement rule}
    \State Accept $I$ and displace conflicts
    \State Return
\ElsIf{$Prd(I) = 1$} \Comment{Predictions rule} 
    \If{($w(I) \geq w_c$ and $|\{J: J \in I_s \text{ and }Prd(J)=1\}| = \emptyset$)} 
    \State Accept $I$ and displace conflicts
    \State Return
    \EndIf
\EndIf
\end{algorithmic}
\end{algorithm}

\begin{theorem}
    Algorithm Revoke-Proportional is $\frac{3\lambda}{\lambda -1}$-consistent.\label{theo:prop-consistent}
\end{theorem}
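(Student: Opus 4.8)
The plan is to run a charging argument in the spirit of the proof of Theorem~\ref{theo:unw-rev-robust} (and of Garay et al.), maintaining throughout the execution an online assignment of the weight of the optimal intervals onto intervals currently held by the algorithm, updated by transfer charges whenever a replacement occurs. Since the predictions are accurate, $Prd(J)=1$ if and only if $J$ belongs to the fixed optimal solution, which pins down the behaviour of the prediction rule of Algorithm~\ref{alg:prop-revoke2} exactly; this is what I would exploit. First I would record the structural consequences. Two optimal intervals are disjoint, so: (i) an optimal interval currently in the solution conflicts with no other optimal interval, hence it is never displaced by the prediction rule (that rule requires none of the displaced intervals to be predicted optimal) and can only be displaced by the main rule, which multiplies weight by at least $\lambda$; (ii) when an optimal interval $O$ is rejected on arrival, the main rule fails and the prediction rule fails, and since no conflicting interval in the current solution can be optimal, the only way the prediction rule can fail is $w(O)<w_c$, i.e.\ $O$ is blocked because the total weight of the (necessarily non-optimal) intervals it conflicts with exceeds $w(O)$; (iii) along any predecessor trace $(P_1,\dots,P_m=I)$, each main-rule step multiplies the weight by at least $\lambda$, each prediction-rule step leaves the weight non-decreasing and produces an optimal interval, and therefore each prediction-rule step is followed by a main-rule step before the next prediction-rule step can occur.

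Next I would set up the charge. Each optimal interval surviving in $ALG$ is assigned to itself; an optimal interval $O$ rejected on arrival has $w(O)$ split, proportionally to weight, over the intervals of $I_s$ it conflicts with, so that by (ii) each such interval $J$ receives strictly less than $w(J)$ from $O$; and whenever $I$ displaces a set $I_s$, all charge on $I_s$ (together with $w(J)$ for any optimal $J\in I_s$, which by (i) can only arise through the main rule) is transferred onto $I$ as $TC(I)$. A short check of each case (main-rule acceptance uses $w(I)\ge\lambda w_c\ge w_c$; prediction-rule acceptance uses $w(I)\ge w_c$) shows this maintains the invariant that the optimal weight seen so far is at most $\sum_{I\in S}\big(w(I)\cdot[I\in OPT]+\Phi(I)\big)$ over the current solution $S$. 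Hence at termination it suffices to prove, for every $I\in ALG$, that the total optimal weight assigned to $I$ is at most $\tfrac{3\lambda}{\lambda-1}w(I)$; I would establish this together with the slightly stronger statement $\Phi(I)\le\tfrac{2\lambda+1}{\lambda-1}w(I)$ for intervals $I\notin OPT$ (which are not assigned to themselves), since that stronger form is what makes the induction close.

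I would prove these bounds by induction following the predecessor trace of $I$. The transfer part telescopes: along a run of main-rule steps the displaced weight is at most a $1/\lambda$ fraction of the accepting interval's weight, so the inherited charge forms a geometric series contributing a $\tfrac{\lambda}{\lambda-1}$ factor; a prediction-rule step inserts an optimal interval which, by (i), never subsequently receives a direct charge, so it contributes nothing to the ``blocking'' part of the bound, and since by (iii) it is paid for by a following main-rule step the geometric accounting survives. The direct part --- the charge $I$ accrues from optimal intervals rejected while $I$ sits in the solution --- is bounded by a constant times $w(I)$: the rejected optimal intervals conflicting with $I$ consist of at most one crossing the left endpoint of $I$, at most one crossing its right endpoint, and a pairwise-disjoint family contained in the span of $I$ of total weight at most $w(I)$, and the proportional split (together with $w(O)<w_c$) keeps each contribution controlled. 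Combining the $\tfrac{\lambda}{\lambda-1}$ transfer factor with this constant direct bound and simplifying yields exactly $\tfrac{3\lambda}{\lambda-1}$, and the corollary-style inequality $ALG\ge\tfrac{\lambda-1}{3\lambda}\,OPT$.

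The main obstacle is the direct-charge estimate. A single rejected optimal interval can conflict with an entire run of intervals of the current solution, and a single algorithm interval may, over its lifetime, be charged by several distinct rejected optimal intervals (one family strictly inside its span, plus endpoint-crossing ones that also conflict with its neighbours), so the proportional split must be arranged so that the accumulated direct charge is exactly the value which, combined with the $\tfrac{\lambda}{\lambda-1}$ transfer factor, gives $\tfrac{3\lambda}{\lambda-1}$ rather than a weaker constant. The second delicate point is reconciling the two acceptance rules inside one predecessor trace --- specifically arguing, via (i) and (iii), that prediction-rule steps cannot chain and are always immediately ``absorbed'' by a subsequent main-rule step, so the geometric decay by $\lambda$ that drives the whole estimate is never interrupted.
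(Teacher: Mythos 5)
Your overall architecture matches the paper's: an online charging invariant $\Phi(I)\le\mu\, w(I)$ with $\mu=\tfrac{3\lambda}{\lambda-1}$, proportional splitting of a rejected optimal interval's weight over its conflicts, the key observation that a rejected optimal interval must have $w(O)<w_c$ (so each conflicting $J$ receives at most $\min\{w(O),w(J)\}$), the resulting bound $DC(I)\le 3w(I)$ from the left/right/subsumed decomposition, and the factor-$\lambda$ decay of transfer charge along main-rule replacements, giving the fixed point $\mu=\tfrac{\mu}{\lambda}+3$. The structural facts (i)--(iii) are all correct and are used implicitly in the paper as well.

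However, the mechanism you propose for closing the induction at a prediction-rule acceptance has a genuine gap. First, the claim that a prediction-rule step is ``always immediately absorbed by a subsequent main-rule step'' fails exactly in the case that matters: the optimal interval $I$ accepted by the prediction rule may survive to the end of the input, and then there is no later factor-$\lambda$ step to pay for the factor-$1$ weight increase; with only the uniform invariant $\Phi(J)\le\mu\, w(J)$ on the displaced intervals you get $w(I)+TC(I)\le(\mu+1)w(I)$, which overshoots. Second, the auxiliary invariant you propose to rescue this, $\Phi(J)\le\tfrac{2\lambda+1}{\lambda-1}w(J)$ for \emph{all} non-optimal $J$, is false: a non-optimal interval in a pure main-rule chain already saturates the recursion $x\mapsto\tfrac{x}{\lambda}+3$, whose fixed point is $\tfrac{3\lambda}{\lambda-1}>\tfrac{2\lambda+1}{\lambda-1}$. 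The missing idea is a \emph{shielding} observation, which is what the paper uses: when optimal $I$ arrives and displaces non-optimal $L_c$, $R_c$ and a subsumed set $M_c$ via the prediction rule, no previously rejected optimal interval can have charged $L_c$ or $R_c$ on the side facing $I$ (it would intersect $I$, contradicting disjointness of $OPT$), and no rejected optimal interval can have charged any $J\in M_c$ at all; since these intervals were necessarily accepted by the main rule, $\Phi(L_c)\le\bigl(\tfrac{\mu}{\lambda}+2\bigr)w(L_c)$ and $\Phi(J)\le\tfrac{\mu}{\lambda}w(J)$ for $J\in M_c$. Note that $\tfrac{\mu}{\lambda}+2=\tfrac{2\lambda+1}{\lambda-1}$, so your auxiliary bound is exactly right --- but only for the displaced neighbors of an optimal arrival, not for arbitrary non-optimal intervals. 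With this restricted version, $TC(I)\le\bigl(\tfrac{\mu}{\lambda}+2\bigr)w_c\le\bigl(\tfrac{\mu}{\lambda}+2\bigr)w(I)$ (using $w(I)\ge w_c$ from the prediction rule), and adding the self-charge $w(I)$ closes the case at $\mu\, w(I)$.
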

\begin{proof}
We consider the optimal solution $OPT$ consistent with the fully accurate predictions. We will show that throughout the execution of the algorithm, we have that $\Phi(I) \leq \mu \cdot w(I)$, for every $I$ in the current solution. In the end, we have that $\sum_{I\in ALG} \Phi(I) = OPT$, giving us the $\mu$-consistency of the algorithm.\\\\
As in the proof of Theorem \ref{theo:unw-rev-robust}, we consider the notions of \textit{transfer charge} ($TC$), and \textit{direct charge} ($DC$). We can express $\Phi(I) = TC(I) + DC(I)$. A transfer charge occurs whenever accepting a new interval $I$ replaces intervals currently in the solution. In that case, the total charge of those conflicting intervals is passed on as transfer charge to $I$. Any additional charge to $I$ after its acceptance is through direct charge, namely rejection of subsequent optimal intervals conflicting with $I$. We will write $DC_J(I)$ to denote the amount of direct charge from interval $J$ to interval $I$. Whenever an optimal interval is accepted, we consider its weight being directly charged to itself, and it cannot be directly charged again.\\\\
Whenever an optimal interval is rejected upon arrival, we charge its weight to the intervals it conflicts with, with its weight being distributed to all its conflicting intervals, in proportion to their weight. Specifically, let $I_o$ be the newly arrived optimal interval that is rejected, and $I_s$ denote the set of conflicting intervals. Each interval $J \in I_s$ is directly charged $DC_{I_o}(J)=w(I_o) \frac{w(J)}{w_c}\leq w(I_o)$. Furthermore, for an optimal interval to have been rejected, it must be that even the predictions rule failed, and because the predictions are accurate, it must have failed because $w(I_o) < w_c$. Because of this, we get that $w(I_o) \frac{w(J)}{w_c} \leq w(J)$, and therefore $DC_{I_o}(J)\leq\min\{w(I_o),w(J)\}$. An interval $I \in ALG$ can be directly charged by at most three different types of optimal intervals: 1) smaller intervals that are subsumed by it, 2) an optimal interval partially conflicting on the left, and 3) an optimal interval partially conflicting on the right.
In the case of smaller optimal intervals subsumed by $I$, the total amount of direct charge from those intervals can be at most $w(I)$. Given that each of the two possible partially conflicting intervals can directly charge $I$ at most $w(I)$, we conclude that for every $I\in ALG$:
\begin{equation}
\label{dc-bound}
    DC(I) \leq 3w(I)
\end{equation}
We omitted the case where the rejected optimal interval subsumes $I$, because in that case $DC(I) = w(I)$ and \ref{dc-bound} holds trivially. We now focus on the total amount of charge on any interval $I\in ALG$. Let:
$$\mu = \frac{3\lambda}{\lambda - 1}$$
We want to make sure that throughout the execution of the algorithm, $\Phi(I) \leq \mu \cdot w(I)$. Before any interval is accepted through replacement, intervals in the solution could have only been directly charged through rejected optimal intervals, and because of \eqref{dc-bound}, and the fact that $\lambda > 1$, our desired bound holds. We now consider all the cases of an interval being accepted through replacement.\\\\
\underline{Case $1$}: $I$ is an optimal interval and it is accepted through the predictions rule. In this case we have that $DC(I) = w(I)$, and we need to look at $TC(I)$. Let $L_c$ and $R_c$ denote the intervals (if any) that $I$ is partially conflicting with on the left and on the right respectively, and let $M_c$ denote the set of intervals that $I$ subsumes. We know that all of these conflicting intervals are not optimal, and they were accepted through the algorithm's main rule. First, notice that for all $J\in M_c$, $DC(J) = 0$, and $\Phi(J) = TC(J) \leq \frac{\mu}{\lambda} \cdot w(J)$. Moreover, $L_c$ and $R_c$ had not yet been directly charged by a partially conflicting optimal interval on one side, and therefore we have that $\Phi(L_c) \leq \frac{\mu}{\lambda}\cdot w(L_c) + 2w(L_c)$, and similarly $\Phi(R_c) \leq \frac{\mu}{\lambda}\cdot w(R_c) + 2w(R_c)$.\\
Putting everything together:
\[
\begin{aligned}
    TC(I) &= \sum_{J\in M_c} \Phi(J) + \Phi(L_c) + \Phi(R_c) \\
     & \leq \frac{\mu}{\lambda}\cdot w_c + 2(w(L_c) + w(R_c))\\
     & \leq \left(\frac{\mu}{\lambda} + 2\right)w_c\\
     &\leq \left(\frac{\mu}{\lambda} + 2\right)w(I)
\end{aligned}
\]
The last inequality being true from the fact that the main predictions rule is satisfied. Given also that $DC(I) = w(I)$, we get that $\Phi(I) \leq (\frac{\mu}{\lambda} + 2)w(I) + w(I) = (\frac{\mu}{\lambda} + 3)w(I)$. With our choice of $\mu$, we have:
\[\begin{aligned}
    \Phi(I) &\leq \left(\frac{\frac{3\lambda}{\lambda - 1}}{\lambda} + 3 \right)w(I) \\
    &= \left(\frac{3\lambda}{\lambda -1}\right)w(I)\\\\
\end{aligned}\]
\underline{Case $2$}: $I$ is an optimal interval and it is accepted through the algorithm's main rule. This is similar to case 1, with $DC(I) = w(I)$ and $w(I)\geq \lambda \cdot w_c$. The same analysis gives us $TC(I)\leq \left( \frac{\mu}{\lambda} + 2 \right)\frac{w(I)}{\lambda}$, and because $\lambda > 1$, the same bound holds.\\\\
\underline{Case $3$}: $I$ is not an optimal interval and it is accepted through the algorithm's main rule. In this case we have that $DC(I) \leq 3w(I)$, and we get that
\[\begin{aligned}
    \Phi(I) & \leq \sum_{J\in I_s} \Phi(J) + 3w(I)\\
    & \leq \frac{\mu}{\lambda}\cdot w(I) + 3w(I)\\
    & = \left(\frac{3\lambda}{\lambda -1}\right)w(I)
\end{aligned}\]
In conclusion, we have that throughout the execution of the algorithm, for $I \in ALG$, $\Phi(I)\leq \frac{3\lambda}{\lambda - 1}w(I)$, and therefore $\frac{OPT}{ALG} \leq \frac{3\lambda}{\lambda - 1}$. 
\end{proof}
\vspace{0.5cm}
We see that as $\lambda \rightarrow \infty$, the algorithm's consistency goes to $3$. We now look at the robustness of algorithm \texttt{Revoke-Proportional}.\\
\begin{theorem}
    Algorithm Revoke-Proportional is $\frac{4\lambda^2 + 2\lambda}{\lambda -1}$-robust.
\end{theorem}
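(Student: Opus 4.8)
The plan is to mirror the potential argument of Theorem~\ref{theo:prop-consistent}, maintaining the invariant $\Phi(I)\le\mu\,w(I)$ for every interval $I$ in the current solution, but now with $\mu=\frac{4\lambda^{2}+2\lambda}{\lambda-1}$ and with predictions that may be entirely adversarial. Charge is generated and moved exactly as there: fixing any optimal solution $OPT$, when one of its intervals $I_{o}$ arrives ALG either accepts it, in which case $w(I_{o})$ is charged to $I_{o}$ itself (and, by disjointness of $OPT$, no further direct charge can ever reach it, so $DC(I_o)=w(I_o)$), or rejects it, in which case $w(I_{o})$ is split among the conflicting intervals in proportion to their weights. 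Transfer replacement conserves charge, so at termination $\sum_{I\in ALG}\Phi(I)=OPT$, and the invariant then gives $OPT\le\mu\cdot ALG$, i.e. $\frac{OPT}{ALG}\le\mu$.

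The first step is to recompute the direct-charge bound, which is the one place the analysis genuinely weakens relative to consistency. A rejected optimal interval $I_{o}$ is only guaranteed to have failed the main rule, so we only know $w(I_{o})<\lambda\,w_{c}$ rather than $w(I_{o})<w_{c}$; hence for a conflicting $J$ we get $DC_{I_{o}}(J)=w(I_{o})\tfrac{w(J)}{w_{c}}<\lambda\,w(J)$ together with $DC_{I_{o}}(J)\le w(I_{o})$. Since each $I\in ALG$ is directly charged by at most one optimal interval partially overlapping it on the left, at most one on the right, and by a pairwise-disjoint family of optimal intervals it subsumes whose total weight is at most $w(I)$ (the alternative case, in which an optimal interval subsumes $I$, is mutually exclusive with these and no worse), summing these estimates yields the weaker cap $DC(I)\le(2\lambda+1)\,w(I)$ for every $I\in ALG$. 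This cap is structural, not time-dependent, so it may legitimately be invoked at any moment, in particular at the instant $I$ is displaced even though it may not yet have finished accumulating charge.

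Next I would record the structural fact driving the induction: any interval accepted through the predictions rule has $Prd=1$, while every interval it conflicts with has $Prd=0$ and was therefore accepted through the main rule (a $Prd=0$ interval can enter the solution no other way; in particular the predictions rule never fires twice in a row along a predecessor trace). The induction runs over acceptances. If $I$ is accepted by the main rule (empty conflict set included), then $TC(I)=\sum_{J\in I_{s}}\Phi(J)\le\mu\,w_{c}\le\tfrac{\mu}{\lambda}\,w(I)$, and together with $DC(I)\le(2\lambda+1)w(I)$ the invariant holds because $\mu\ge\frac{(2\lambda+1)\lambda}{\lambda-1}$. If $I$ is accepted by the predictions rule, then each conflicting $J$ is a main-rule interval, so $TC(J)\le\tfrac{\mu}{\lambda}w(J)$ (fixed at its acceptance) and $DC(J)\le(2\lambda+1)w(J)$, whence $\Phi(J)\le(\tfrac{\mu}{\lambda}+2\lambda+1)w(J)$; summing over $I_s$ and using the size condition $w_{c}\le w(I)$ of the predictions rule gives $TC(I)\le(\tfrac{\mu}{\lambda}+2\lambda+1)w(I)$, and adding $DC(I)\le(2\lambda+1)w(I)$ yields $\Phi(I)\le(\tfrac{\mu}{\lambda}+4\lambda+2)w(I)\le\mu\,w(I)$, the last inequality holding precisely because $\mu=\frac{4\lambda^{2}+2\lambda}{\lambda-1}$.

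The binding case, and the main obstacle, is exactly this predictions-rule step. In the consistency proof the intervals underneath an optimal interval admitted through the predictions rule carried \emph{zero} direct charge, because they were subsumed by a genuinely optimal interval and hence, by disjointness of $OPT$, shielded from any direct charge; under adversarial predictions that enveloping interval need not lie in $OPT$, so each such conflicting interval may already have absorbed up to $(2\lambda+1)w$ of direct charge. This extra $(2\lambda+1)w_{c}\le(2\lambda+1)w(I)$ of transfer charge, together with the degraded direct-charge bound $(2\lambda+1)w$ in place of $3w$, is precisely what pushes $\mu$ up from $\tfrac{3\lambda}{\lambda-1}$ to $\tfrac{4\lambda^{2}+2\lambda}{\lambda-1}$; everything else in the argument, including the treatment of real interval lengths, goes through as in the consistency proof.
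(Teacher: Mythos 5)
Your proposal is correct and follows essentially the same argument as the paper: the same proportional direct-charging scheme with the weakened bound $DC(I)\le(2\lambda+1)w(I)$, and the same two-case analysis distinguishing main-rule from predictions-rule acceptances, using that every interval displaced by a predictions-rule acceptance was itself accepted by the main rule. The only cosmetic difference is bookkeeping — the paper maintains a two-tier invariant ($\mu$ for main-rule intervals and $\mu+\delta$ for predictions-rule intervals, with $\mu+\delta=\frac{4\lambda^2+2\lambda}{\lambda-1}$), whereas you keep a single invariant and recover the tighter main-rule bound via $TC(I)\le\frac{\mu}{\lambda}w(I)$; the resulting intermediate constants coincide exactly.
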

\begin{proof}
    The argument is similar to the proof of Theorem \ref{theo:prop-consistent}. Both \textit{direct}, and \textit{transfer} charging work the same way as before. Let $\mu = \frac{2\lambda^2 +3\lambda + 1}{\lambda -1 }$, and $\delta = 2\lambda + 1$. We will show that that for every $I \in ALG$, $\Phi(I)\leq (\mu + \delta)\cdot w(I)=\frac{4\lambda^2 + 2\lambda}{\lambda -1}w(I)$.\\\\
    Notice first that the upper bound on direct charging is not as good as before. More precisely, with $I_o$ being a newly arrived optimal interval that will be rejected and $I_s$ being its conflicting intervals currently in the solution, we have that for every $J\in I_s$, $DC_{I_o}(J)= w(I_o)\frac{w(J)}{w_c}\leq \lambda \cdot w(J)$. More generally, $DC_{I_o}(J) \leq \min\{w(I_o),\lambda\cdot w(J)\}$. As before, given the three different possible types of conflicts, we have that:
    \begin{equation}
        DC(I) \leq (2\lambda + 1)w(I)
    \end{equation}
    We can now bound the total amount of charge on every interval in the algorithm's solution, throughout its execution. Before any replacement happens, the bound $\Phi(I) \leq (\mu + \delta)\cdot w(I)$ holds trivially.\\\\
    \underline{Case $1$}: Interval $I$ is accepted through the algorithm's main rule. We get that:
    \[\begin{aligned}
        \Phi(I) &\leq (\mu + \delta)\cdot w_c + (2\lambda + 1)\cdot w(I)\\
        & \leq (\mu + \delta)\cdot \frac{w(I)}{\lambda} + (2\lambda + 1)\cdot w(I)\\
        & = \left(\frac{\mu + \delta}{\lambda} + 2\lambda + 1\right)w(I)\\
        & = \left( \frac{\frac{4\lambda^2 +2\lambda}{\lambda-1}+2\lambda^2 + \lambda}{\lambda} \right)w(I)\\
        & = \mu \cdot w(I)
    \end{aligned}\]
\underline{Case $2$}: Interval $I$ is accepted through the algorithm's predictions rule. Notice that in this case, all conflicting intervals must have been accepted through the main rule, and not the predictions rule. Because of this, as we showed in case $1$, for every $J\in I_s$, it holds that $\Phi(J) \leq \mu\cdot w(J)$. This helps us bound the amount of transfer charge to interval $I$.
\[\begin{aligned}
    \Phi(I) &\leq \mu \cdot w_c + (2\lambda + 1)\cdot w(I)\\
    & \leq \mu \cdot w(I) + (2\lambda + 1)\cdot w(I) \\
    & = (\mu + \delta)\cdot w(I)
\end{aligned}\]
To summarize, we have shown that in the worst case, $\Phi(I) \leq (\mu + \delta)\cdot w(I)$ for every $I\in ALG$. This concludes the proof.
\end{proof}

\begin{figure*}[t!]
\centering
\caption{\texttt{NASA-iPSC} dataset.} (a) Unit \& Irrevocable, (b) Unit \& Revoking, (c) Proportional \& Irrevocable, (d) Proportional \& Revoking
\includegraphics[width=\textwidth]{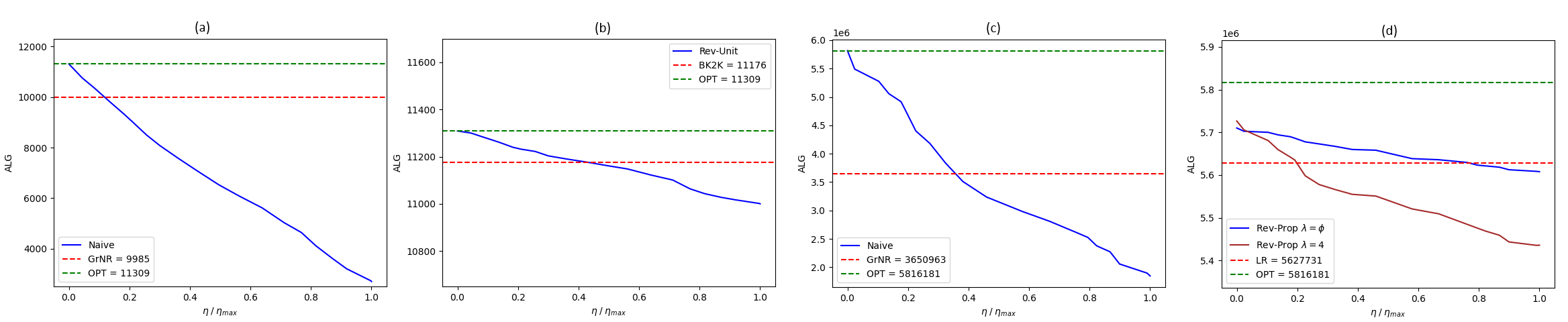}
\label{fig:nasa_exps}
\end{figure*}
\begin{figure*}[t!]
\centering
\caption{\texttt{CTC-SP2} dataset.}
\includegraphics[width=\textwidth]{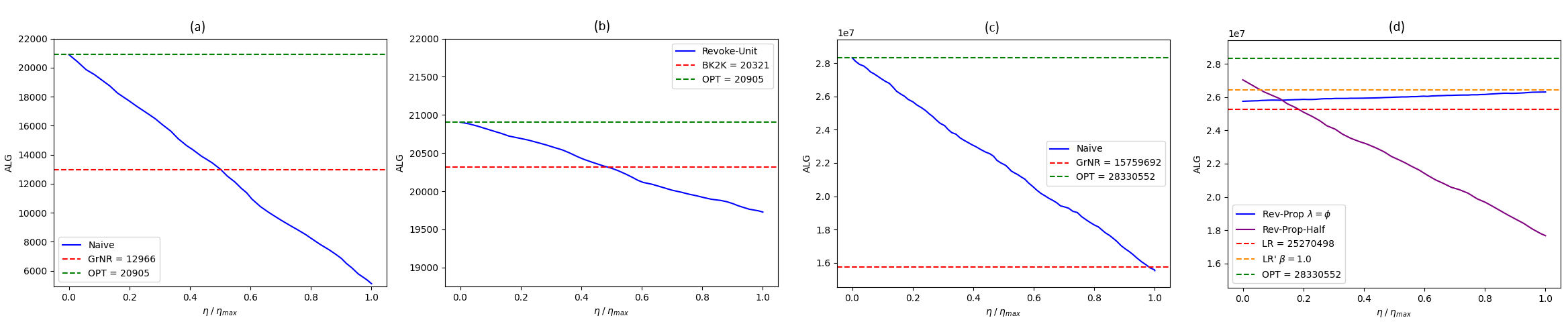}
\label{fig:ctc_exps}
\end{figure*}

We note that for $\lambda > \frac{2+\sqrt{5}}{\sqrt{5} -1}\approx 3.42 $, the consistency of our algorithm is already better than $2\phi + 1$, and $22.15$-robust. We have shown we can get consistency better than the online bound of \texttt{LR}, while maintaining bounded robustness. We believe further improvement on the bounds of \texttt{Revoke-Proportional} is possible, with an analysis that looks more closely at the dependence between direct and transfer charging.\\
One may also be able to further improve the algorithm by accepting an interval that is not as big as the sum of its conflicts, making the algorithm $a$-increasing with $a<1$. This would relax the predictions rule further, and make the algorithm more prone to bad choices caused by misleading predictions. In our experiments, we briefly discuss one such algorithm, which we call \texttt{Revoke-Prop-Half}, and which can accept a supposedly optimal interval even if it is half as big as its conflicts.


\section{Experimental Results}\label{section:exp}

We use real-world data from scheduling jobs on parallel machines\footnote{https://www.cs.huji.ac.il/labs/parallel/workload/} to test our algorithms. More information on the handling of these datasets can be found in a study by Feitelson et al. \cite{feitelson2014experience}. We focus on two datasets, \texttt{NASA-iPSC} (18,239 jobs) and \texttt{CTC-SP2} (77,222 jobs). As is usually the case, the performance of algorithms is much better than their worst-case bounds. For every algorithm we average its performance over random permutations of the input instance, for multiple error values. The $y$ axis values for proportional weights are expressed in scientific notation. We note that algorithm \texttt{GrNR} refers to a greedy algorithm without revoking, a very natural algorithm to compare our \texttt{Naive} algorithm against. All other algorithms have been mentioned earlier in the paper. Our experimental results are in line with our intuition, with the predictions algorithms outperforming predictionless algorithms for some values of the error, even when they are not $1$-consistent. Especially in the setting of revocable acceptances, even with half of the max possible error, our predictions algorithms perform just as well as their purely online counterparts. In the CTC dataset (figure \ref{fig:ctc_exps}) this is always the case, with the \texttt{Naive} algorithm outperforming \texttt{GrNR} for nearly all values of error.
In the case of proportional weights with revoking in figure \ref{fig:nasa_exps}, it is noteworthy that the variant of \texttt{Revoke-Proportional} with $\lambda = 4$, outperforms the $\lambda = \phi$ variant for some small values of error, but its performance degrades faster. This further validates the notion that the bigger the $\lambda$, the more the algorithm follows the predictions.\\
We also have to address the seemingly abnormal behavior of algorithm \texttt{Revoke-Proportional} in figure \ref{fig:ctc_exps}(d). As the error increases, so does the performance of \texttt{Revoke-Proportional}, which is counterintuitive and dissimilar to the corresponding plot of figure \ref{fig:nasa_exps}. This is because of the underlying structure of the \texttt{CTC-SP2} dataset, on which \textit{greedy} algorithms perform exceptionally well. We showcase this by having included algorithm \texttt{LR$'$} with $\beta = 1$, the algorithm that accepts a new interval if it is at least as big as everything it conflicts with. As the error increases, a larger number of intervals can be accepted through this clearly beneficial, relaxed predictions rule, which helps explain the improved performance. We also contrast this with algorithm \texttt{Rev-Prop-Half}, which uses a modified predictions rule, that can accept supposedly optimal intervals that are half the weight of their conflicts. This makes the algorithm more sensitive to the predictions, and its performance falls in line with what we would expect.\\
In conclusion, algorithms for interval selection can greatly benefit from utilizing imperfect predictions, and remain robust even in the presence of high error.


\begin{acks}
The author would like to thank Allan Borodin, Joan Boyar, and Kim Larsen for many helpful discussions, and for pointing out errors in earlier versions of this work.
\end{acks}



\newpage

\bibliographystyle{ACM-Reference-Format} 
\bibliography{thesis}


\end{document}